\renewcommand{\epsilon}{\varepsilon}
\newcommand{\showgraphics}[1]{\includegraphics[scale=0.9]{figures/#1}}
\newcommand{\verbatimStart}[1]{\begin{Verbatim}[frame=lines,gobble=0,numbersep=0.3cm,numbers=left,xleftmargin=0.7cm,numberblanklines=false,firstnumber=#1]}
\theoremstyle{plain}
\newtheorem{theorem}{Theorem}[]
\newtheorem{lemma}[theorem]{Lemma}
\newtheorem{corollary}[theorem]{Corollary}
\newtheorem{proposition}[theorem]{Proposition}
\newtheorem{conjecture}[theorem]{Conjecture}
\newtheorem{definition}[theorem]{Definition}
\theoremstyle{definition}
\newtheorem{remark}[theorem]{Remark}
\newtheorem*{proviso*}{Proviso}
\newtheorem*{note*}{Note}
\newcounter{tempcounter}
\title{Sampling Colourings of the Triangular Lattice}
\author{Markus Jalsenius\\~\\Department of Computer Science\\University of Bristol\\Bristol, BS8 1UB, UK}
\date{}
\begin{document}
\maketitle

\begin{abstract}
We show that the Glauber dynamics on proper 9-colourings of the triangular lattice is rapidly mixing, which allows for efficient sampling. Consequently, there is a fully polynomial randomised approximation scheme (FPRAS) for counting proper 9-colourings of the triangular lattice. Proper colourings correspond to configurations in the zero-temperature anti-ferro\-magnetic Potts model. We show that the spin system consisting of proper 9-colourings of the triangular lattice has strong spatial mixing. This implies that there is a unique infinite-volume Gibbs distribution, which is an important property studied in statistical physics. Our results build on previous work by Goldberg, Martin and Paterson, who showed similar results for 10~colours on the triangular lattice. Their work was preceded by Salas and Sokal's 11-colour result. Both proofs rely on computational assistance, and so does our 9-colour proof. We have used a randomised heuristic to guide us towards rigourous results.
\end{abstract}

%% ================================================================== %%
%% ================================================================== %%

\section{Introduction}
\label{sec:introduction}

This paper is concerned with proper 9-colourings of the triangular lattice. A \emph{$q$-colouring} of a graph $G$ is an assignment of colours from the set $\{1,\dots,q\}$ to the vertices. A colouring is \emph{proper} if adjacent vertices receive different colours. There are two fundamental problems that have been studied extensively: sampling $q$-colourings from the uniform distribution on proper $q$-colourings of $G$ and counting the number of proper $q$-colourings of $G$. Both these problems have been studied for various graphs and number of colours, and it has been shown that the two problems are intimately related. More precisely, if there is an efficient method of finding a good approximate solution to one of the problems, then there is an efficient method of finding a good approximate solution to the other problem. See for instance~\cite{j-csi-03} and~\cite{jvv-rgcs-86} for details on this topic. The problem of counting $q$-colourings of a graph with maximum degree $\Delta$ is \numP-complete when $q\geq 3$ and $\Delta\geq 3$ (see~\cite{bdgj-accsdg-99}). Hence we have to rely on approximate solutions. A great deal of work has been put into mapping out for which graphs and number of colours there is a fully polynomial randomised approximation scheme (FPRAS) for counting colourings, or equivalently, a fully polynomial almost uniform sampler (FPAUS).

Sampling colourings is closely related to problems arising in statistical physics, where proper colourings correspond to configurations in the \emph{zero-temperature anti-ferromagnetic Potts model}. The notion of a proper colouring imposes a very \emph{local} property: it constrains the allowed colours on two adjacent vertices. Given local constraints, physicists are interested in understanding the \emph{macroscopic} properties of the system.

Let $\calL$ be an infinite graph. Think of $\calL$ as lattice graph, for example the grid. For a finite subgraph $G$ of $\calL$, the \emph{boundary} of $G$ is the set of vertices in $\calL$ that are adjacent to $G$ but are not in $G$ themselves. Given a colouring $\calB$ of the boundary, a proper $q$-colouring of $G$ \emph{agrees} with $\calB$ if no vertex adjacent to the boundary receives any of the colours of its boundary neighbours. Let $\pi_\calB$ denote the uniform distribution on proper $q$-colourings of $G$ that agree with $\calB$. Suppose $\pi_\calL$ is a distribution on the set of proper $q$-colourings of the infinite graph $\calL$. For a proper $q$-colouring $\calC$ of $\calL$, let $\pi_\calL(\,\cdot \mid \calC(\calL\setminus G))$ denote the conditional distribution on proper $q$-colouring of $G$ induced by $\pi_\calL$ when the colours of all vertices except for those in $G$ are specified by $\calC$. The distribution $\pi_\calL$ is an \emph{infinite-volume Gibbs distribution} (with respect to proper colourings) if, for any proper $q$-colouring $\calC$ of $\calL$, $\pi_\calL(\,\cdot \mid \calC(\calL\setminus G))=\pi_\calB$, where $\calB$ is the colouring of the boundary of $G$ induced by $\calC$. That is, the conditional distribution on colourings of $G$ depends only on the boundary of $G$ and not on other vertices of $\calL$. The physical intuition for an infinite-volume Gibbs distribution is that it describes a macroscopic equilibrium, for which all parts of the system are in equilibrium with their boundaries. It is known that there always exists at least one infinite-volume Gibbs distribution~\cite{ghm-rgef-01}, and a central question in statistical physics is to determine whether it is unique or not. The phenomenon of non-uniqueness is referred to as a \emph{phase transition}. For more on Gibbs distributions, see, for example,~\cite{g-gmpt-88} or~\cite{ghm-rgef-01}.

Given a finite subgraph $G$ of $\calL$ and a boundary colouring $\calB$ of $G$, let $G'$ be any connected subset of the vertices of $G$ and let $\theta$ be the distribution on colourings of $G'$ that is induced by $\pi_\calB$. The question we ask is how does $\theta$ change if we change the colour of one single vertex on the boundary of $G$. If $G'$ is far away from the boundary then we would expect that changing the colour of a boundary vertex does not have too much influence on $\theta$. If this is true, then we have a property known as \emph{strong spatial mixing}. The exact definition will be given later. A consequence of strong spatial mixing is that the infinite-volume Gibbs distribution is unique~\cite{dsvw-mtslss-04,w-mtsdss-04,w-ccugm-05}. In this paper we show that there is strong spatial mixing for 9 colours on the triangular lattice.

Another important problem in statistical physics is to determine how quickly a system converges to equilibrium. This could lead to insights in how the system returns to equilibrium after a shock has disturbed it. A dynamical process that is commonly studied is the \emph{Glauber dynamics}. The Glauber dynamics is a Markov chain on the set of proper $q$-colourings of a graph $G$. A transition from one state (colouring) to another is made as follows. Choose a vertex $v\in G$ uniformly at random and choose one of the $q$ colours uniformly at random. If the randomly selected colour is different from the colours of the neighbours of $v$ then recolour $v$ with the selected colour. This procedure is also known as a \emph{single-vertex heat-bath update}. If it is repeated over and over then the distribution of the current state of the Markov chain will converge towards the uniform distribution. The question is how long to run the dynamics in order to get close to the uniform distribution. If we only need to simulate the dynamics over a small number of steps then the Glauber dynamics would be a suitable tool for sampling colourings. We say that a Markov chain is \emph{rapidly mixing} if a small number of steps are sufficient in order to get arbitrarily close to its stationary distribution. Exact definitions will be given later. It is a well-known fact that if the system has strong spatial mixing, then the Glauber dynamics is (often) rapidly mixing \cite{dsvw-mtslss-04,m-lgddsm-97,w-mtsdss-04}. In fact, the converse is also true. Since we prove strong spatial mixing for 9 colours on the triangular lattice, we also prove that the Glauber dynamics is rapidly mixing. That is, we show that there is an FPAUS for sampling 9-colourings of the triangular lattice. As mentioned above, the FPAUS implies that there is an FPRAS for counting 9-colourings.


%% ================================================================== %%

\subsection{Organisation}

The remainder of the paper is organised as follows.
In Section~\ref{sec:preliminaries}, we define the basic notation, strong spatial mixing and the Glauber dynamics.
In Section~\ref{sec:related-work}, we formally state our results and discuss related work.
In Section~\ref{sec:boundary-pairs}, we introduce two key concepts that will be used throughout the paper, and Sections~\ref{sec:recursive} through~\ref{sec:exponential-decay} contain the technical part that leads to the strong spatial mixing result. Section~\ref{sec:glauber} is about the implication from strong spatial mixing to a rapidly mixing Glauber dynamics. Finally, in Section~\ref{sec:computations}, we describe the computer assisted parts of our proofs.


%% ================================================================== %%

\section{Preliminaries}
\label{sec:preliminaries}

\subsection{Basic notation}

Let the infinite graph $\calT = (V_\calT, E_\calT)$ denote the \emph{triangular lattice}, formally defined as follows. There is a bijection $\xi$ from $V_\calT$ to $\{(x,y) \mid x,y \in\mathbb{Z}$ and $x + y$ is even\} such that $\xi^{-1}(x,y)\in V_\calT$ has the six neighbours $\xi^{-1}(x,y+2)$, $\xi^{-1}(x+1,y+1)$, $\xi^{-1}(x+1,y-1)$, $\xi^{-1}(x,y-2)$, $\xi^{-1}(x-1,y-1)$ and $\xi^{-1}(x-1,y+1)$. Note that $\xi$ maps a vertex of $V_\calT$ to a coordinate in a Cartesian coordinate system, which allows us to draw the triangular lattice as illustrated in Figure~\ref{fig:lattice}. However, throughout this article we will draw vertices as a hexagons, illustrated in Figure~\ref{fig:latticehex}.
%
\begin{figure}[t]
    \begin{minipage}[t]{0.47\linewidth}
        \centering
        \showgraphics{lattice-normal}
        \caption{\label{fig:lattice}The triangular lattice.}
    \end{minipage}% %Do not remove the % after \end{minipage}
    \hfill
    \begin{minipage}[t]{0.47\linewidth}
        \centering
        \showgraphics{lattice-faces}
        \caption{\label{fig:latticehex}The triangular lattice. Vertices are drawn as a hexagons.}
    \end{minipage}
\end{figure}

The bijection $\xi$ specifies a unique clockwise ordering of edges incident to the same vertex. We write $e_1\prec \cdots \prec e_k$ to indicate that $e_1 \rightarrow e_2 \rightarrow \cdots \rightarrow e_k \rightarrow e_1 \rightarrow \cdots$ is the clockwise ordering of the edges $e_1,\dots,e_k$ around some vertex $v$. We say that two edges $e_1$ and $e_2$ incident to the same vertex $v$ are \emph{clockwise adjacent} if either $e_1 \prec e_2 \prec e_3 \prec e_4 \prec e_5 \prec e_6$ or $e_2 \prec e_1 \prec e_3 \prec e_4 \prec e_5 \prec e_6$, where $e_3,\dots,e_6$ are the other four edges in $E_\calT$ incident to $v$.

%?? Change u, v so that v is in region, w is on boundary.
A \emph{region} $R$ is a finite subset of $V_\calT$ and its \emph{edge boundary}, denoted $\calE R = \{\{v,w\} \mid \{v,w\}\in E_\calT, v\in R, w \notin R\}$.
The \emph{vertex boundary} of $R$, denoted $\partial R = \{w \mid \{v,w\}\in \calE R, w\notin R\}$.

For a region $R$, $v\in R$ and $w\in \partial R$, we let $d_R(w,v)$ denote the number of edges on a shortest path $P$ in $\calT$ between $w$ and $v$ such that all vertices in $P$ except for $w$ are in $R$. Thus, $d_R(w,v)=1$ if $v$ and $w$ are adjacent. For a subregion $R'\subseteq R$, we let $d_R(w,R')= \min_{u\in R'} d_R(w,u)$.
%Further, for an edge $e=\{w,u\}\in \calE R$ where $w\in \partial R$, we let $d_R(e,v)=d_R(w,v)$.

Let $[q]=\{1,\dots,q\}$. A \emph{$q$-colouring} $\calC$ of a set $S$ (a set of vertices or edges), is a function from $S$ to $[q]$, where $[q]$ represents a set of $q$ \emph{colours}. A \emph{partial $q$-colouring} of $S$ is a function from $S$ to $[q]\cup \{0\}$, where the additional colour~0 can be thought of as representing ``no colour.''
% We often omit the prefix $q$- and write
% colouring instead of $q$-colouring when the number $q$ of colours is
% arbitrary or obvious from the context.

A $q$-colouring $\calC$ of a region $R$ is \emph{proper} if for all $u,v\in R$, $\calC(u)\neq \calC(v)$ when $\{u,v\}\in E_\trilat$. For a partial $q$-colouring $\calB$ of the boundary $\partial R$, a $q$-colouring $\calC$ of $R$ \emph{agrees} with $\calB$ if for all $\{w,v\}\in \calE R$, $\calB(w)\neq \calC(v)$. Similarly, for a partial $q$-colouring $B$ of $\calE R$, a $q$-colouring $\calC$ of $R$ agrees with $B$ if for all $\{w,v\}\in \calE R$, $B(\{w,v\})\neq \calC(v)$. To aid the reader, we
use the calligraphic font to represent colourings of vertices and the normal font to represent colourings of edges (like $\calB$ and $B$ above). Further, partial colourings (containing the ``colour''~0) are used only for \emph{boundary} vertices and \emph{boundary} edges and not for the regions themselves.

For a region $R$ and partial $q$-colouring $\calB$ of $\partial R$ (respectively, $B$ of $\calE R$),
$\Omega^q_\calB$ (respectively, $\Omega^q_B$) denotes the set of proper $q$-colourings of $R$ that agree with $\calB$ (respectively, $B$). The uniform distribution on $\Omega^q_\calB$ (respectively, $\Omega^q_B$) is denoted $\pi_\calB^q$ (respectively, $\pi_B^q$).
For a subregion $R'\subseteq R$, we let $\pi_\calB^q(R')$ denote the distribution on proper $q$-colourings of $R'$ induced by $\pi_\calB^q$.

For a distribution $\theta$ on a set $S$, we write $\prob_\theta(x)$ for the probability of drawing $x\in S$ under $\theta$. The \emph{total variation distance} between two distributions $\theta_1$ and $\theta_2$ on a set $S$ is
%
\begin{equation*}
    \dtv(\theta_1,\theta_2)=
        \frac{1}{2}\sum_{x\in S}
            |\prob_{\theta_1}(x)-\prob_{\theta_2}(x)| =
        \max_{S'\subseteq S}
            |\prob_{\theta_1}(S')-\prob_{\theta_2}(S')|\,.
\end{equation*}

For two distributions $\theta_1$ and $\theta_2$ on two sets $S_1$ and $S_2$, respectively, a \emph{coupling} $\Psi$ of $\theta_1$ and $\theta_2$ is a joint distribution on $S_1\times S_2$ that has $\theta_1$ and $\theta_2$ as its marginal distributions.

We write $\expect{X}$ to denote the expected value of a random variable $X$.

%% ================================================================== %%

\subsection{Strong spatial mixing}
\label{sec:ssm}
%?? Maybe add the Gibbs measures stuff here

We say that there is \emph{strong spatial mixing} for $q$ colours on the triangular lattice if there are two constants $\beta, \beta' > 0$ (that may depend on $q$) such that
%
\begin{equation}
    \label{eq:ssm}
    \dtv(\pi^q_{\calB}(R'),\pi^q_{\calB'}(R')) \leq
        \beta|R'|e^{-\beta'd_R(w,R')}
\end{equation}
%
is true for all regions $R$, subregions $R'\subseteq R$, $w\in \partial R$ and partial $q$-colourings $\calB$ and $\calB'$ of $\partial R$ such that $\calB(v)=\calB'(v)$ for all $v\in \partial R \setminus \{w\}$, $\calB(w)\neq \calB'(w)$, $\calB(w)> 0$ and $\calB'(w)> 0$.

%% ================================================================== %%

\subsection{The Glauber dynamics}
\label{sub:Rapid-mixing}

The \emph{Glauber dynamics} for $q$-colourings of a region $R$ with partial $q$-colouring $\calB$ of $\partial R$ is a Markov chain with state space $\Omega^q_\calB$ and the following transitions. The evolution from a colouring $\calC_t\in \Omega^q_\calB$ to a new colouring $\calC_{t+1}\in \Omega^q_\calB$ is defined by the following steps.
%
\begin{enumerate}
    \setlength{\itemsep}{0pt}
    \item Choose a vertex $v$ from $R$ uniformly at random.

    \item[] Let $\calC$ be the colouring of $R\cup \partial R$ specified by $\calC_t$ and $\calB$.

    \item[] Let $\calC_\text{neighbours} = \{\calC(u) \mid \{u,v\}\in E_\trilat\}$.

    \item Choose a colour $c$ from $[q]\setminus \calC_\text{neighbours}$ uniformly at random.

    \item Let
        %
        \begin{equation*}
            \calC_{t+1}(u)=
                \begin{cases}
                     c,          &\text{if $u = v$;}\\
                     \calC_t(u), &\text{otherwise.}
                \end{cases}
        \end{equation*}
\end{enumerate}

The probability of not leaving a state in a transition is always positive. Hence the Glauber dynamics is an ergodic Markov chain if the number of colours $q$ is sufficiently large; every state can be reached from any other state. In this article we are concerned with $q=9$ colours for which we note that there are always at least three available colours in Step~2 of the dynamics above. Hence the Glauber dynamics is ergodic for $q=9$.

It is straightforward to verify that when the Glauber dynamics is ergodic then its stationary distribution is $\pi_\calB^q$. Let $\theta^t_{\calC_0}$ be the distribution on $\Omega^q_\calB$ after $t$ steps of the Glauber dynamics, starting with colouring $\calC_0$. For $\delta>0$, the \emph{mixing time}
%
\begin{equation*}
    \tau_\calB^q(\delta) =
        \max_{\calC_0\in \Omega^q_\calB}
        \min_t\{t \mid
        \text{$\dtv(\theta^t_{\calC_0},\pi_\calB^q) \leq \delta\}$}
\end{equation*}
%
is the number of transitions until the dynamics is within total variation distance $\delta$ of the stationary distribution, assuming the worst initial colouring $\calC_0$. We say that the Glauber dynamics is \emph{rapidly mixing} if $\tau_\calB^q(\delta)$ is upper-bounded by a polynomial in $|R|$ and $\log(1/\delta)$.

%% ================================================================== %%

\subsection{Approximate counting}

A \emph{randomised approximation scheme} (RAS) for a function $f:\Sigma^*\rightarrow\mathbb{N}$ is a probabilistic Turing machine that takes as input a pair $(x,\epsilon)\in \Sigma^*\times (0,1)$, and produces, on an output tape, an integer random variable~$Y$ satisfying the condition $\Pr(e^{-\epsilon} \leq Y/f(x) \leq e^\epsilon)\geq \frac{3}{4}$. The choice of the value $\frac{3}{4}$ is inconsequential: the same class of problems has a RAS if we choose any probability in the interval $(\frac{1}{2},1)$ (see for example~\cite{jvv-rgcs-86}). A \emph{fully polynomial randomised approximation scheme} (FPRAS) is a RAS that runs in time upper-bounded by a polynomial in $|x|$ and $\epsilon^{-1}$.

It was mentioned in the introduction that the existence of an efficient method for sampling colourings implies that there is an FPRAS for counting the number of colourings. We could use a rapidly mixing Glauber dynamics to construct (in a non-trivial way) an FPRAS for estimating $|\Omega^q_\calB|$. For details on the topic of how sampling and counting are related, see, for example, \cite{j-csi-03} or~\cite{jvv-rgcs-86}.


%% ================================================================== %%

\section{Our results and related work}
\label{sec:related-work}

These are the two main theorems of the paper.

\newcommand{\ssmthm}{There is strong spatial mixing for 9 colours on the triangular lattice.}
\newcounter{ssmcounter}
\setcounter{ssmcounter}{\value{theorem}}
\begin{theorem}
    \label{thm:ssm}
    \ssmthm
\end{theorem}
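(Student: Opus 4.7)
The plan is to establish inequality~\eqref{eq:ssm} by a recursive \emph{peeling} argument: starting from the disagreement vertex $w\in\partial R$, one shows that as colours are revealed inwards from $w$, the influence of flipping $\calB(w)$ to $\calB'(w)$ contracts geometrically in the graph distance $d_R(w,R')$. First I would formalise the notion of a boundary pair $(\calB,\calB')$ that differs at a single vertex and build an explicit coupling of $\pi^q_\calB(R')$ with $\pi^q_{\calB'}(R')$ by revealing $R$ hexagon-by-hexagon (or layer-by-layer) in a BFS-style order rooted at $w$, using at each step the optimal coupling of the two conditional marginals. A disagreement propagates to the next layer only when the local configuration has too few ``free'' colours to block it, and the aim is to bound the expected weight of disagreements at distance $k+1$ by a factor $\kappa<1$ times the expected weight at distance $k$.

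To make the recursion work at $q=9$ I would introduce an appropriate \emph{potential} (weight) on partial boundary colourings of a local patch, chosen so that the influence of a single flipped colour, summed over all proper completions of a constant-sized patch on the triangular lattice, satisfies a single-step contraction. This is exactly the structure suggested by Sections~\ref{sec:boundary-pairs}--\ref{sec:exponential-decay}: Section~\ref{sec:boundary-pairs} to set up the boundary-pair and weight formalism, Section~\ref{sec:recursive} to derive the one-step recursion
\begin{equation*}
    \bbE\bigl[\text{weight at layer }k+1\bigr]\;\leq\;\kappa\cdot\bbE\bigl[\text{weight at layer }k\bigr],
\end{equation*}
and a final exponential-decay section to iterate it. Iterating the contraction $d_R(w,R')$ times and then summing the resulting bound over the $|R'|$ vertices of $R'$, via the coupling inequality $\dtv(\pi^q_\calB(R'),\pi^q_{\calB'}(R'))\leq \prob(\text{some disagreement survives in }R')$, yields precisely the bound $\beta|R'|e^{-\beta' d_R(w,R')}$ required by~\eqref{eq:ssm}.

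The hard part will be verifying $\kappa<1$ for $q=9$. Salas--Sokal and Goldberg--Martin--Paterson handled $q\geq 11$ and $q=10$ by analytic estimates that ignore most of the structure of the triangular lattice, but with only nine colours the naive per-vertex bound is too weak: there are only about three genuinely free colours at a typical interior vertex. The way to recover contraction is to examine sufficiently large local patches, where different boundary extensions cancel against one another. This means the single-step inequality cannot be checked by hand, and my approach would be exactly what the abstract hints at: use a randomised heuristic to search for a good weight function, then use the methods of Section~\ref{sec:computations} to rigourously enumerate all relevant local configurations of the chosen patch size, compute the ratio $\sum_{\text{extensions}} w(\cdot)\prob_{\pi^q_\calB}/\prob_{\pi^q_{\calB'}}$ with exact rational arithmetic, and certify $\kappa<1$ deterministically. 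The careful exploitation of the clockwise edge-ordering $\prec$ on $\calT$ and the hexagonal drawing (Figure~\ref{fig:latticehex}) should keep the number of symmetry classes of patches manageable enough for this enumeration to terminate, at which point Theorem~\ref{thm:ssm} follows.
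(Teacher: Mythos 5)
Your proposal captures the paper's strategy accurately: recursively couple outwards from the disagreement site $w$, introduce a weight (potential) on local boundary patterns so that the contraction factor depends on the local shape rather than on a single worst-case constant, search for that weight heuristically, and then certify the per-step contraction at $q=9$ by exhaustive computer enumeration with exact rational arithmetic; iterating and summing over $R'$ gives~\eqref{eq:ssm}. In the paper this is realised concretely by breaking a vertex discrepancy into a chain of edge-boundary pairs (Lemma~\ref{lem:vertex-decay}), associating with each edge-boundary pair a recursion tree $T_X$ and the level sums $\Gamma_d(X)$ (Section~\ref{sec:recursive}), bounding the one-step discrepancy probability $\mu(X)$ by shrinking to one of 39 reference regions via the convexity lemma (Lemmas~\ref{lem:convexity}--\ref{lem:mu-values}), and taking as the ``weight'' a vector of constants $\alpha_1,\dots,\alpha_{2048}$ indexed by the 2048 possible local shapes near the distinguished edge, found by linear programming so that the typed contraction $\mu_m(\alpha_{b_1}+\cdots+\alpha_{b_5})\leq\alpha_{b_0}(1-\epsilon)$ holds (Lemmas~\ref{lem:alphas}--\ref{lem:gamma-decay}) --- so the same approach, with the ``one-step recursion'' you posit being a coupled system of inequalities across shape types rather than a single scalar $\kappa$.
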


\newcommand{\glauberthm}{The Glauber dynamics on 9-colourings of a region $R$ of the triangular lattice, with partial 9-colouring $\calB$ of $\partial R$, has mixing time $\tau_\calB^9(\delta)\in O(n^2+n\log\frac{1}{\delta})$, where $n=|R|$.}
    %For all regions $R$ and partial 9-colourings $\calB$ of $\partial R$, the Glauber dynamics for 9-colourings on $R$ has
\newcounter{glaubercounter}
\setcounter{glaubercounter}{\value{theorem}}
\begin{theorem}
    \label{thm:rapid-mixing}
    \glauberthm
\end{theorem}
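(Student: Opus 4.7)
The plan is to deduce Theorem~\ref{thm:rapid-mixing} from Theorem~\ref{thm:ssm} via the now-standard route: strong spatial mixing on a bounded-degree lattice implies a spectral gap of order $1/n$ for the single-site Glauber dynamics, from which the claimed mixing time follows immediately upon multiplying by $\log(\pi^{-1}_{\min}/\delta)$ and using the crude bound $\pi^{-1}_{\min}\le 9^n$.

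First, I would fix the SSM constants $\beta,\beta'>0$ provided by Theorem~\ref{thm:ssm} and choose a block radius $L=L(\beta,\beta')$ large enough that the SSM decay factor inside a ball of radius $L$ is below the contraction threshold used in the block-dynamics analysis of Dyer--Sinclair--Vigoda--Weitz. For each $v\in R$, let $B_v$ be the intersection of $R$ with the ball of radius $L$ around $v$ in $\calT$, and consider the heat-bath block dynamics that at each step picks $v\in R$ uniformly at random and resamples the colouring on $B_v$ from the correct conditional distribution given the colours outside $B_v$ together with $\calB$.

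Second, using the SSM estimate (\ref{eq:ssm}), I would show that this block dynamics has a spectral gap bounded below by a positive constant, uniformly in $R$. The key idea is that, in a coupling of two trajectories started from configurations differing at a single vertex, resampling a block that contains the disagreement produces trajectories whose expected disagreement set inside the block is only a constant fraction of what a naive bound would give, because boundary influence decays exponentially in distance by~(\ref{eq:ssm}). Summing over the possible choices of $v$ in a standard path-coupling computation yields contraction, hence the gap. Then I would invoke Martinelli's block-to-single-site comparison~\cite{m-lgddsm-97} (see also~\cite{dsvw-mtslss-04,w-mtsdss-04}) to translate the $\Omega(1)$ block-dynamics gap into a spectral gap of order $1/n$ for the single-site Glauber dynamics, losing only a factor proportional to $|R|$ because each block has $O(1)$ vertices. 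Combining with $\tau_\calB^q(\delta)\le\text{gap}^{-1}\log(\pi^{-1}_{\min}/\delta)$ and $\log\pi^{-1}_{\min}\le n\log 9$ gives $\tau_\calB^9(\delta)\in O(n^2+n\log(1/\delta))$.

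The main obstacle is the geometric irregularity of arbitrary finite regions $R$: blocks $B_v$ adjacent to $\partial R$ can be truncated or oddly shaped, and one needs the SSM constants to be uniform over all such truncated blocks and all partial boundary colourings of them. This uniformity is precisely what Theorem~\ref{thm:ssm} provides, since it is stated for every finite region and every partial boundary colouring with all constants independent of the region; once this is in hand, the remainder of the argument is essentially bookkeeping inside the Martinelli--Weitz framework, with nothing specific to the triangular lattice beyond its bounded degree.
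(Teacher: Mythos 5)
Your proposal takes a genuinely different route from the paper, and a somewhat circuitous one. The paper does \emph{not} deduce Theorem~\ref{thm:rapid-mixing} from Theorem~\ref{thm:ssm}; rather, both theorems are obtained in parallel from Lemma~\ref{lem:vertex-decay}, which asserts the existence of a \emph{single coupling} $\Psi$ with $\sum_{v\in R}\expect{1_{\Psi,v}}\leq \frac{50}{\epsilon(1-\epsilon)}(1-\epsilon)^{d_{R_X}(w_X,R)}$. The paper plugs this directly (with $R=R_X$) into the definition of an $\epsilon$-coupling cover and invokes Theorem~8 of Goldberg, Martin and Paterson to obtain $\tau_\calB^9(\delta)\in O(n^2+n\log\frac{1}{\delta})$. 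No block dynamics or spectral-gap comparison is needed; the neighbourhood-amenability of $\calT$ is the only extra geometric input.

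There is a genuine gap in your version: you start from the total-variation form of strong spatial mixing, Equation~(\ref{eq:ssm}), and then silently assume that the block-dynamics path-coupling analysis can be carried out. But the path-coupling step requires a single coupling of the two block conditional distributions in which \emph{every} vertex $v$ inside the block has $\prob(\calC(v)\neq\calC'(v))$ decaying in $d(w,v)$, so that the expected number of new disagreements is summable. TV-form SSM applied to singleton subregions $R'=\{v\}$ only tells you that for each $v$ separately there is an optimal coupling achieving $\dtv(\pi_\calB(\{v\}),\pi_{\calB'}(\{v\}))\leq\beta e^{-\beta' d(w,v)}$; these per-vertex couplings are not jointly realisable in general. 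Upgrading the family of per-vertex TV bounds to a single joint coupling with all per-vertex bounds is precisely the content of the recursive tree construction in Section~\ref{sec:recursive} and its consequences, culminating in Lemma~\ref{lem:vertex-decay}. Your proposal would in effect re-derive this coupling lemma from one of its own corollaries. Invoking \cite{dsvw-mtslss-04,w-mtsdss-04,m-lgddsm-97} as a black box does not remove the issue — those works carry out exactly this delicate TV-to-coupling conversion, and moreover their quantitative statements are written for $\mathbb{Z}^d$ boxes, so you would still owe an argument that the triangular lattice with arbitrary finite regions and partial boundary colourings falls within their scope, which is what the $\epsilon$-coupling-cover formulation of \cite{gmp-ssmfc-04} handles cleanly.

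A smaller point: applying (\ref{eq:ssm}) with $R'$ equal to the whole block gives a bound of the form $\beta|B_v|e^{-\beta' d}$, and when the discrepancy sits next to the block ($d=1$) this is $\approx\beta L^2(1-\epsilon)$, which is far from small. One must instead sum single-vertex bounds across the block, which brings you back to the joint-coupling issue above. So the cleanest fix is simply to argue from Lemma~\ref{lem:vertex-decay} directly, as the paper does.
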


The previously best known mixing results on the triangular lattice was given for 11~colours by Salas and Sokal~\cite{ss-apt-97} in~1997, and later improved by Goldberg, Martin and Paterson~\cite{gmp-ssmfc-04} to 10~colours in 2004. Both proofs involved computational assistance.

To place these results in context, we first mention some general mixing bounds that are applicable to graphs with small girth, such as many of those lattices studied in statistical physics. Independently, Jerrum~\cite{j-vsa-95} and Salas and Sokal~\cite{ss-apt-97} proved that for proper $q$-colourings on a graph of maximum degree $\Delta$, the Glauber dynamics has $O(n \log n)$ mixing time when $q > 2 \Delta$, where $n$ is the number of vertices. For $q = 2 \Delta$, Bubley and Dyer~\cite{bd-pctprmmc-97} showed that it mixes in $O(n^3)$ time, and later Molloy~\cite{m-vrmmc2dc-01} showed that it mixes in $O(n \log n)$ time. In~\cite{v-ibsc-00}, Vigoda used a Markov chain that differs from the Glauber dynamics and showed that it has $O(n \log n)$ mixing time when $q > (11/6)\Delta$. This result implies that also the Glauber dynamics is rapidly mixing for $q > (11/6)\Delta$. Goldberg, Martin and Paterson~\cite{gmp-ssmfc-04} showed that any triangle free graph has strong spatial mixing provided $q>\alpha\Delta-\gamma$, where $\alpha$ is the solution to $\alpha^{\alpha}=e$ ($\alpha\approx1.76322$) and $\gamma=\frac{4\alpha^{3}-6\alpha^{2}-3\alpha+4}{2(\alpha^{2}-1)}\approx0.47031$. For triangle free graphs of low degree, this is still today the best general mixing bound that has been proved.

The technique Goldberg, Martin and Paterson used in~\cite{gmp-ssmfc-04} can be tailored and tweaked for particular graphs in order to give mixing bounds that are better than the general bound. This was demonstrated in~\cite{gmp-ssmfc-04} for the lattice $\mathbb{Z}^3$ with $q=10$ colours (the general result would give mixing for $q=11$ colours), and the triangular lattice with $q=10$ colours. Although the general result holds only when the graph is triangle free, the tailored proof for the triangular lattice does not require this. Both proofs were computer assisted, where the computational part consisted of looping though a huge number of boundary colourings and maximising certain values. This task would have been impossible to do by hand.

The general mixing bounds are tough barriers that seem difficult to break, though in many cases we expect mixing to occur with fewer colours. Therefore several proofs of mixing have been given for specific graphs or lattices. These proofs often involve computational assistance. We have mentioned two examples above as well as Salas and Sokal's 11-colour mixing bound on the triangular lattice~\cite{ss-apt-97}. Other examples of computer assisted proofs are those by Achlioptas, Molloy, Moore and van Bussel~\cite{ammb-sgcfc-04}, who showed mixing for $q=6$ colours on the grid, to which an alternative proof was given by Goldberg, Jalsenius, Martin and Paterson in~\cite{gjmp-imbafpm-06}. Salas and Sokal gave in~\cite{ss-apt-97} a computer assisted proof for $q=6$ colours on the kagome lattice, which was later improved to $q=5$ colours by Jalsenius in~\cite{Jal09:kagome}. It should also be mentioned that Jalsenius and Pedersen~\cite{JK08:grid-scan} have given a computer assisted proof of mixing with $q=7$ colours for the grid when the dynamics is updating vertices in deterministic order, as opposed to the Glauber dynamics which chooses a vertex at random in each step. The new colour is still chosen at random, though. This result is an improvement of the non-computer assisted proofs by Pedersen~\cite{p-dcssbd-07} and Dyer, Goldberg and Jerrum~\cite{dgj-dcss-06}.

The results on 9-colourings of the triangular lattice that we present in this paper are based on the 10-colour proof given by Goldberg, Martin and Paterson in~\cite{gmp-ssmfc-04}. Our 9-colour proof is of a whole different scale than the 10-colour proof and we must use computer assistance much more extensively and in more than one stage of the proof. The computations are rather demanding and prior to the final and rigourous results we had to use a heuristic to guide us in the right direction. We believe that the idea of such a heuristic could be useful to improve the mixing bounds for other lattices as well. However, our proof also demonstrates how demanding the computations can be, and unless new techniques are developed, there will probably be little progress in lowering the bounds for a vast number of lattices.


%% ================================================================== %%

\section{Boundary pairs}
\label{sec:boundary-pairs}

 Similarly to Goldberg, Martin and Paterson in~\cite{gmp-ssmfc-04}, we define two structures referred to as \emph{vertex-boundary pairs} and \emph{edge-boundary pairs}. Before stating the formal definitions, we give an overview of the two concepts. A vertex-boundary pair consists of a region and two partial colourings of its vertex boundary. The colourings are identical except for on one boundary vertex, which is not allowed to have the colour~0 in either of the two colourings. That is, the vertex must have a ``real'' colour. An edge-boundary pair is similar to a vertex-boundary pair with the difference that the two boundary colourings are of the edge boundary instead of the vertex boundary. The formal definition of an edge-boundary pair might come across as slightly awkward as there are some additional conditions that must be met. The purpose of these conditions is to facilitate certain technicalities in the subsequent sections.

Formally, a \emph{vertex-boundary pair} $X$ consists of
%
\begin{itemize}
    \setlength{\itemsep}{0pt}
    \item a region $R_X$,
    \item a distinguished boundary vertex $w_X \in \partial R_X$         and
    \item a pair $(\calB_X,\calBprime_X)$ of partial 9-colourings
            of $\partial R_X$ such that
            \begin{itemize}
                \setlength{\itemsep}{0pt}
                \item $\calB_X(v)=\calBprime_X(v)$ for all
                        $v\in \partial R_X \setminus \{w_X\}$,
                        $\calB_X(w_X)\neq \calBprime_X(w_X)$
                        and
                \item $\calB_X(w_X)> 0$ and
                        $\calBprime_X(w_X)> 0$.
            \end{itemize}
\end{itemize}
%
An \emph{edge-boundary pair} $X$ consists of
%
\begin{itemize}
    \setlength{\itemsep}{0pt}
    \item a region $R_X$,
    \item a distinguished boundary edge $e_X = \{w_X,v_X\}\in \calE
            R_X$, where $w_X\in \partial R_X$ and $v_X\in R_X$
            are two distinguished vertices such that at most \emph{five}
            neighbours of $w_X$ are in $R_X$, and
    \item a pair $(B_X,\Bprime_X)$ of partial 9-colourings of $\calE R_X$
            such that
        \begin{itemize}
            \setlength{\itemsep}{0pt}
            \item $B_X(e)=\Bprime_X(e)$ for all
                    $e\in \calE R_X \setminus \{e_X\}$, $B_X(e_X)\neq
                    \Bprime_X(e_X)$,
            \item $B_X(e_X)> 0$, $\Bprime_X(e_X)> 0$, and
            \item for any two clockwise adjacent edges $e_1,e_2\in
                    \calE R_X$ that share a vertex $w\in \partial R$,
                    $B_X(e_1)=B_X(e_2)$ or $\Bprime_X(e_1)=\Bprime_X(e_2)$.
        \end{itemize}
\end{itemize}
%
Note that the very last condition means that two clockwise adjacent edges have the same colour in both $B_X$ and $\Bprime_X$ unless one of the edges is $e_X$.

We let $E_X=\{\{v_X,u\} \mid \{v_X,u\}\in
E_\trilat \text{ and } u\in R_X\}$ denote the set of edges between $v_X$ and a vertex in
$R_X$.

For a vertex-boundary pair $X$ (or edge-boundary pair $X$),
a coupling $\Psi$ of $\pi_{\calB_X}^9$ and $\pi_{\calBprime_X}^9$
(or $\pi_{B_X}^9$ and $\pi_{\Bprime_X}^9$) and $v\in R_X$, we define the indicator random variable
%
\begin{equation*}
1_{\Psi,v}=
\begin{cases}
     1, &\text{$(\calC,\calC')$ is a pair of colourings drawn
        from $\Psi$ and $\calC(v)\neq \calC'(v)$;}\\
     0, &\text{$\calC(v)= \calC'(v)$\,.}
\end{cases}
\end{equation*}

%?? Maybe move this paragraph, and some of the definitions above,
%?? to a more appropriate place, like Section Recursive coupling.
For an edge-boundary pair $X$, we define $\Psi_X$ to be some coupling $\Psi$ of $\pi_{B_X}^9$ and $\pi_{\Bprime_X}^9$ minimising $\expect{1_{\Psi,v_X}}$. We define $\mu(X)=\expect{1_{\Psi_X,v_X}}$. For every pair $(c,c')\in [9]\times [9]$ of colours, we let $p_X(c,c')$ be the probability that, when a pair $(\calC,\calC')$ of colourings is drawn from $\Psi_X$, $\calC(v_X)=c$ and $\calC'(v_X)=c'$.
Note that
%
\begin{equation*}
    \mu(X)= \expect{1_{\Psi_X,v_X}}= \mathop{\sum_{c,c'\in [9]}}_{c\neq c'} p_X(c,c')\,.
\end{equation*}
%
%?? Note that $\mu(X)= \expect{1_{\Psi_X,v_X}}$.

%% ================================================================== %%
%% ================================================================== %%

\section{Recursive coupling}
\label{sec:recursive}

In order to show strong spatial mixing for $q=9$ colours, we show that for all vertex-boundary pairs $X$ and subregions $R'\subseteq R_X$, there exists a coupling $\Psi$ of $\pi_{\calB_X}^9$ and $\pi_{\calBprime_X}^9$ such that
%
\begin{equation*}
    \sum_{v\in R'} \expect{1_{\Psi,v}}
\end{equation*}
%
decreases exponentially in the distance between $w_X$ and $R'$. As we will see, this implies strong spatial mixing for $q=9$ colours. In order to show exponential decay, it will be convenient to work with edge-boundary pairs.

We closely follow the approach taken by Goldberg, Martin and Paterson in~\cite{gmp-ssmfc-04} and define a tree
$T_X$ associated with each edge-boundary pair $X$. The tree $T_X$ is constructed as follows (Figure~\ref{fig:tree} illustrates an example of a tree $T_X$).
%
\begin{figure}[p]
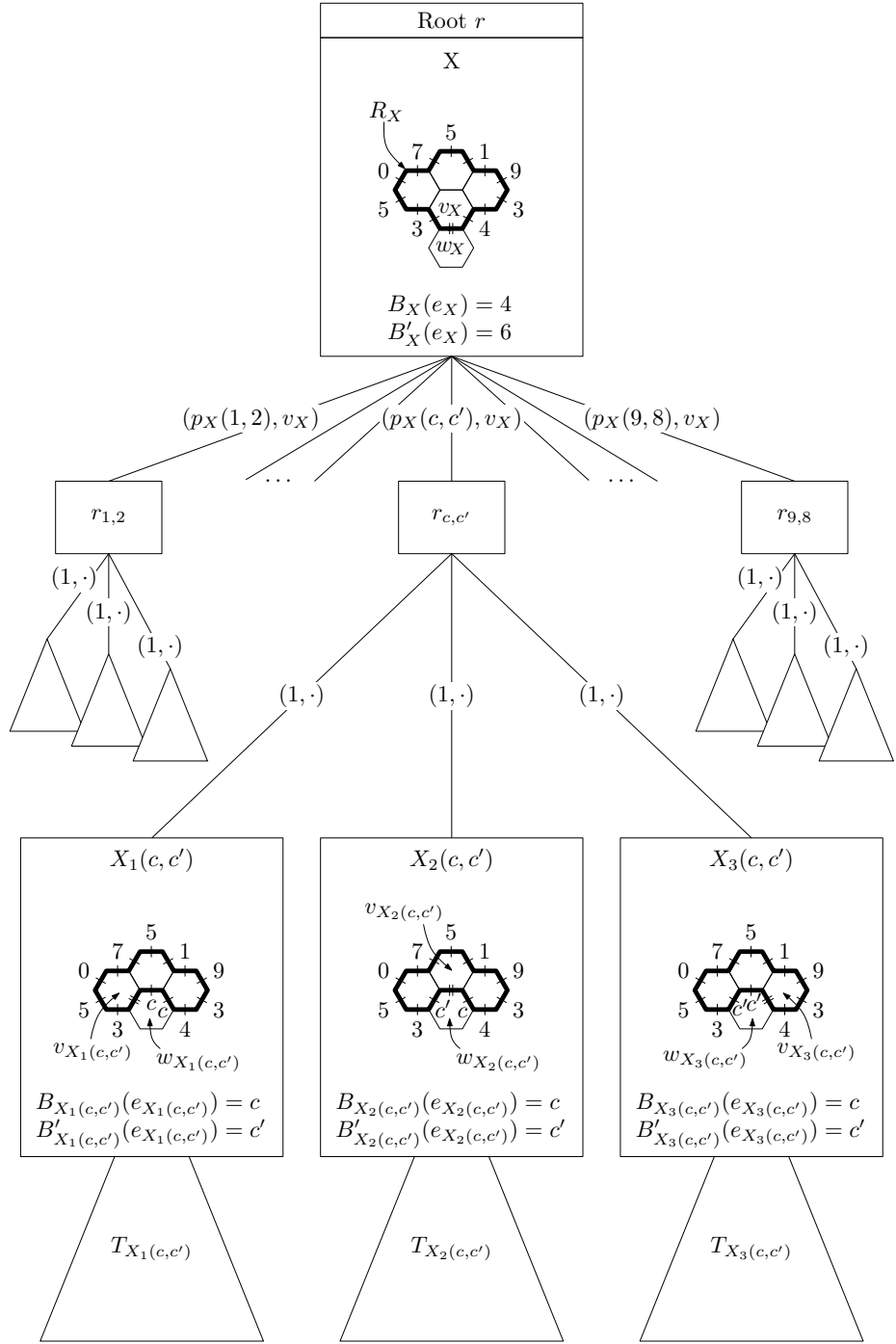

    \centering
    \showgraphics{tree}
    \caption{\label{fig:tree}An example of the tree $T_X$.}
\end{figure}
%
Start with a node $r$ which will be the root of $T_X$. For every pair $(c,c')\in [9]\times [9]$ of colours such that $c\neq c'$, add an edge labelled $(p_X(c,c'),v_X)$ from $r$ to a new node $r_{c,c'}$. If $E_X$ is empty, $r_{c,c'}$ is a leaf. Otherwise, let $e_1,\dots,e_k$ be the edges in $E_X$ such that $e_X\prec e_1\prec \cdots \prec e_k$. For each $i\in \{1,\dots,k\}$, let $X_i(c,c')$ be the edge-boundary pair consisting of
%
\begin{itemize}
    \setlength{\itemsep}{0pt}
    \item the region $R_{X_i(c,c')}=R_X\setminus \{v_X\}$,
    \item the distinguished boundary edge $e_{X_i(c,c')}= \{w_{X_i(c,c')},v_{X_i(c,c')}\}= e_i$, where $w_{X_i(c,c')}= v_X$, and
    \item the pair $(B_{X_i(c,c')},\Bprime_{X_i(c,c')})$ of partial 9-colourings of $\calE R_{X_i(c,c')}$
            such that
        \begin{itemize}
            \setlength{\itemsep}{0pt}
            \item $B_{X_i(c,c')}(e)= B_X(e)$ for $e\in \calE R_{X_i(c,c')}\cap \calE R_X$,
            \item $B_{X_i(c,c')}(e)= c'$ for $e\in \{e_1,\dots,e_{i-1}\}$,
            \item $B_{X_i(c,c')}(e)= c$ for $e\in \{e_i,\dots,e_k\}$, and
            \item $\Bprime_{X_i(c,c')}$ is identical to $B_{X_i(c,c')}$ on all edges but $e_i$ for which $\Bprime_{X_i(c,c')}(e_i)= c'$.
        \end{itemize}
\end{itemize}

Note that the properties of $X_i(c,c')$ meet all the requirements for being an edge-boundary pair; the vertex $w_{X_i(c,c')}$ has at most five neighbours in $R_{X_i(c,c')}$ (since $w_X\notin R_{X_i(c,c')}$ is a neighbour of $w_{X_i(c,c')}$), and any two clockwise adjacent edges in $\calE R_{X_i(c,c')}$ that share a vertex in $\partial R_{X_i(c,c')}$ have the same colour in $B_{X_i(c,c')}$ or $\Bprime_{X_i(c,c')}$ (or both).

Recursively construct $T_{X_i(c,c')}$, the tree corresponding to edge-boundary pair $X_i(c,c')$. Add an edge with label $(1,\cdot)$ from $r_{c,c'}$ to the root of $T_{X_i(c,c')}$. That completes the construction of $T_X$.

We say that an edge $e$ of $T_{X}$ is \emph{degenerate} if the second component of its label is `$\cdot$'. For edges $e$ and $e'$ of $T_{X}$, we write $e\rightarrow e'$ to denote the fact that $e$ is an ancestor of $e'$. That is, either $e=e'$, or $e$ is a proper ancestor of $e'$. Define the \emph{level} of edge $e$ to be the number of non-degenerate edges on the path from the root down to, and including, $e$. Suppose that $e$ is an edge of $T_{X}$ with label $(p,v)$. We say that the \emph{weight} $w(e)$ of edge $e$ is $p$. Also the \emph{name} $n(e)$ of edge $e$ is $v$. The \emph{likelihood} $\ell(e)$ of $e$ is $\prod_{e':e'\rightarrow e}w(e)$. The \emph{cost} $\gamma(v,T_{X})$ of a vertex $v\in R_{X}$ in $T_{X}$ is $\sum_{e:n(e)=v}\ell(e)$.

\begin{lemma}[Lemma~12 of Goldberg et al.\@ \cite{gmp-ssmfc-04}]
    \label{lem:cost}
    For every edge-boundary pair $X$ there exists a coupling $\Psi$ of $\pi_{B_{X}}^9$ and $\pi_{\Bprime_{X}}^9$ such that, for all $v\in R_{X}$, $\expect{1_{\Psi,v}}\leq \gamma(v,T_{X})$.
\end{lemma}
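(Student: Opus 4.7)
The plan is to induct on $|R_X|$ and build $\Psi$ in parallel with the tree $T_X$. At the root, apply the optimal single-vertex coupling $\Psi_X$ to obtain a colour pair $(c, c')$ at $v_X$ with probability $p_X(c,c')$. This already achieves the bound for $v = v_X$: the disagreement probability equals $\mu(X) = \sum_{c \ne c'} p_X(c,c') = \gamma(v_X, T_X)$, since the only edges of $T_X$ named $v_X$ are the $72$ level-$1$ edges out of the root, each with likelihood equal to its own weight.

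For $v \ne v_X$, the remaining task is to couple the two conditional distributions on $R_X \setminus \{v_X\}$ induced by $\calC(v_X) = c$ and $\calC'(v_X) = c'$. If $c = c'$ these conditionals coincide (because $B_X$ and $\Bprime_X$ agree off $e_X$) and the identity coupling contributes nothing. If $c \ne c'$, I would interpolate through the chain of edge-boundary pairs $X_1(c,c'), \ldots, X_k(c,c')$ built into $T_X$. Consecutive pairs share a marginal: both $\Bprime_{X_i(c,c')}$ and $B_{X_{i+1}(c,c')}$ assign $c'$ to $e_1, \ldots, e_i$ and $c$ to $e_{i+1}, \ldots, e_k$; the first marginal of $X_1(c,c')$ matches the $\calC(v_X) = c$ conditional and the second marginal of $X_k(c,c')$ matches the $\calC'(v_X) = c'$ conditional.

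By induction each $X_i(c,c')$ admits a coupling $\Psi_i$ with $\expect{1_{\Psi_i, v}} \le \gamma(v, T_{X_i(c,c')})$. Gluing $\Psi_1, \ldots, \Psi_k$ along their shared marginals (standard composition of couplings) produces a joint coupling of the two endpoint conditional distributions under which $v$ disagrees only if it disagrees in some consecutive pair, so a union bound gives disagreement probability at $v$ bounded by $\sum_{i=1}^k \gamma(v, T_{X_i(c,c')})$. Averaging the conditional couplings over $(c,c')$ with weights $p_X(c,c')$ yields the overall $\Psi$ satisfying
\[
\expect{1_{\Psi, v}} \le \sum_{c \ne c'} p_X(c,c') \sum_{i=1}^{k} \gamma(v, T_{X_i(c,c')}) = \gamma(v, T_X),
\]
where the last equality reflects the recursive likelihood definition: every edge of $T_X$ named $v$ lies in some subtree $T_{X_i(c,c')}$, and its likelihood in $T_X$ is its likelihood in the subtree multiplied by the non-degenerate factor $p_X(c,c')$ at the root together with the degenerate factor $1$ just above the subtree.

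The main obstacle is essentially bookkeeping: checking that every $X_i(c,c')$ is a valid edge-boundary pair, so that the induction can be applied. The at-most-five-neighbours requirement for $w_{X_i(c,c')} = v_X$ is immediate because $w_X \notin R_{X_i(c,c')}$ is one of the six lattice neighbours of $v_X$. The clockwise-adjacent equality condition is inherited from $X$: the only edge on which $B_{X_i(c,c')}$ and $\Bprime_{X_i(c,c')}$ differ is $e_i$, so any clockwise-adjacent pair not involving $e_i$ keeps the property it had in $B_X$ or $\Bprime_X$, while any pair involving $e_i$ satisfies the equality in the unchanged colouring. The gluing of couplings along common marginals is the standard construction used in \cite{gmp-ssmfc-04}, and no additional subtlety arises.
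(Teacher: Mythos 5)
Your proof is correct and follows the same recursive construction that the paper describes (and which Goldberg, Martin and Paterson carry out in full in the cited source): couple $v_X$ via $\Psi_X$, then when $c\ne c'$ interpolate through the chain $X_1(c,c'),\dots,X_k(c,c')$ by gluing the inductively obtained couplings along their shared marginals, and recover $\gamma(v,T_X)$ from the likelihood recursion. The paper itself states this lemma with a citation and a one-sentence sketch rather than a proof, and your argument supplies exactly the details that sketch refers to, including the verification that each $X_i(c,c')$ is a legitimate edge-boundary pair.
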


In the proof, given by Goldberg, Martin and Paterson in~\cite{gmp-ssmfc-04}, the coupling $\Psi$ is constructed recursively in the same manner as the tree $T_X$. The discrepancy at a given boundary vertex is broken to discrepancies at single boundary edges, so at every stage of the recursion, only pairs of colourings with a discrepancy at a single edge have to be considered (i.e., edge-boundary pairs).

For an edge-boundary pair $X$ and $d\geq 1$, we let $E_d(X)$ denote the set of level-$d$ edges in $T_X$. We define $\Gamma_d(X)= \sum_{e\in E_d(X)} \ell(e)$.

\begin{lemma}
    \label{lem:gamma-bounded}
    For every edge-boundary pair $X$ and $R\subseteq R_X$ there exists a coupling $\Psi$ of $\pi_{B_{X}}^9$ and $\pi_{\Bprime_{X}}^9$ such that
    %
    \begin{equation*}
        \sum_{v\in R} \expect{1_{\Psi,v}}\leq
            \sum_{d\geq d_{R_X}\!(w_X,R)} \Gamma_d(X)\,.
    \end{equation*}
\end{lemma}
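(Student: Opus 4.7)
The plan is to start from Lemma~\ref{lem:cost}, which already gives a coupling $\Psi$ satisfying $\expect{1_{\Psi,v}}\leq \gamma(v,T_X)$ for every $v\in R_X$. Summing over $v\in R$ and swapping the order of summation,
\begin{equation*}
    \sum_{v\in R}\expect{1_{\Psi,v}} \leq \sum_{v\in R}\gamma(v,T_X) = \sum_{v\in R}\sum_{e:\,n(e)=v}\ell(e) = \sum_{e:\,n(e)\in R}\ell(e).
\end{equation*}
It then suffices to show that every edge $e$ of $T_X$ with $n(e)\in R$ sits at a level at least $d_{R_X}(w_X,R)$, since then the sum above is bounded by $\sum_{d\geq d_{R_X}(w_X,R)}\sum_{e\in E_d(X)}\ell(e)=\sum_{d\geq d_{R_X}(w_X,R)}\Gamma_d(X)$.

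The heart of the argument is therefore the following claim, which I would prove by induction on the recursive construction of $T_X$: for every edge-boundary pair $X$ and every edge $e$ of $T_X$ with $n(e)=v\in R_X$, the level of $e$ is at least $d_{R_X}(w_X,v)$. For the base case, observe that the level-$1$ edges out of the root are the edges into the nodes $r_{c,c'}$, all of which have name $v_X$; since $\{w_X,v_X\}=e_X\in \calE R_X$ we have $d_{R_X}(w_X,v_X)=1$, so the bound holds. For the inductive step, suppose $e$ is a non-degenerate edge at level $d+1$ in $T_X$. Then $e$ lies in some subtree $T_{X_i(c,c')}$ at level $d$, and by the inductive hypothesis, $d_{R_{X_i(c,c')}}(w_{X_i(c,c')},n(e))\leq d$. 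Now $R_{X_i(c,c')}=R_X\setminus\{v_X\}$ and $w_{X_i(c,c')}=v_X$, so there is a path in $\calT$ from $v_X$ to $n(e)$ of length at most $d$ whose internal vertices all lie in $R_X\setminus\{v_X\}\subseteq R_X$. Prepending $w_X$ (which is adjacent to $v_X$ via $e_X$) gives a path from $w_X$ to $n(e)$ of length at most $d+1$, all of whose vertices other than $w_X$ lie in $R_X$ (since $v_X\in R_X$ and the rest lie in $R_X\setminus\{v_X\}$). Hence $d_{R_X}(w_X,n(e))\leq d+1$, as desired. Degenerate edges inherit the level of their (non-degenerate) parent and can only make the bound more slack, so they cause no trouble.

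Finally, for any $v\in R$ we have $d_{R_X}(w_X,v)\geq d_{R_X}(w_X,R)$ by definition of $d_{R_X}(w_X,R)=\min_{u\in R}d_{R_X}(w_X,u)$, so every edge $e$ with $n(e)\in R$ has level at least $d_{R_X}(w_X,R)$, yielding the stated inequality. The only mildly delicate step is the geometric claim about the shrinking region in the recursion; everything else is a rearrangement. The main thing to check carefully is that the path-lifting in the inductive step respects the ``internal vertices in the region'' requirement of $d_{R}(\cdot,\cdot)$, which is precisely what makes the factor $v_X\in R_X$ work.
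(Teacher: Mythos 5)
Your proof is correct and follows the same route as the paper: invoke Lemma~\ref{lem:cost}, rewrite the bound as a sum of likelihoods, and then control which levels can carry edges named by vertices of $R$. The paper asserts the key inequality without justification, whereas you explicitly prove the underlying geometric fact (that an edge named $v$ in $T_X$ has level at least $d_{R_X}(w_X,v)$) by induction on the recursive construction, which is exactly the argument the paper leaves implicit.
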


\begin{proof}
By Lemma~\ref{lem:cost} there is a coupling $\Psi$ such that
%
\begin{align*}
    \sum_{v\in R} \expect{1_{\Psi,v}} \,&\leq\,
        \sum_{v\in R} \gamma(v,T_{X}) \,=\,
             \sum_{v\in R} \, \sum_{e:n(e)=v} \ell(e) \,\leq
                \sum_{d\geq d_{R_X}\!(w_X,R)}\,
                \sum_{e\in E_d(X)} \ell(e)\\
    &= \sum_{d\geq d_{R_X}\!(w_X,R)} \Gamma_d(X)\,.
    \qedhere
\end{align*}
\end{proof}

The following recursive definition of $\Gamma_d(X)$ is equivalent to the definition above and will be useful in subsequent sections.
%
\begin{equation*}
\Gamma_d(X)=
\begin{cases}
    \displaystyle \mathop{\sum_{c,c'\in [9]}}_{c\neq c'} p_X(c,c'),
                &d=1\,; \vspace{2mm} \\
    \displaystyle \mathop{\sum_{c,c'\in [9]}}_{c\neq c'} \left(p_X(c,c')
                \sum_{i=1}^{|E_X|}
                    \Gamma_{d-1}(X_i(c,c'))\right),
                &d>1\,.\\
\end{cases}
\end{equation*}
%
Note that $\Gamma_1(X)= \mu(X)$.
For a set $U$ of edge-boundary pairs, we define $\Gamma_d(U)= \max_{X\in U} \Gamma_d(X)$. We define $\Gamma_d(\emptyset)=0$ for all $d$.


%% ================================================================== %%
%% ================================================================== %%

\section{Exponential decay}
\label{sec:exponential-decay}

Our first step towards a proof of exponential decay and strong spatial mixing is to show that for any edge-boundary pair $X$, $\Gamma_d(X)$ decreases exponentially with $d$. A key ingredient in the proof is the quantity $\mu(X)$, for which we want to derive sufficiently good upper bounds. This is where we start.

We use the lemma below by Goldberg, Martin and Paterson~\cite{gmp-ssmfc-04}\footnote{\,In~\cite{gmp-ssmfc-04}, Goldberg, Martin and Paterson define $\nu(X)=\expect{1_{\Psi_X,v_X}}$, which is the definition of $\mu(X)$ in this article. They give an alterative definition of $\mu(X)$, however, as pointed out in the proof of Lemma~13 in~\cite{gmp-ssmfc-04}, their $\mu(X)= \nu(X)$ indeed.}. The idea is to shrink the region $R_X$ so that $v_X$ is kept within the smaller region, and use this smaller region to construct a new edge-boundary pair $X'$ whose boundary colourings are identical to the boundary colourings of $X$ on overlapping boundary edges. The colours of the boundary edges introduced by shrinking $R_X$ are chosen to maximise $\mu(X')$. Then $\mu(X)\leq \mu(X')$.

\begin{lemma}[Lemma~13 of Goldberg et al.\@ \cite{gmp-ssmfc-04}]
    \label{lem:convexity}
    Suppose that $X$ is an edge-boundary pair. Let $R$ be any subset of $R_X$ which includes $v_X$. Let $\Lambda$ be the set of edge-boundary pairs $X'$ such that $R_{X'}= R$, $e_{X'}= e_X$, and, for $e\in \calE R_X\cap \calE R_{X'}$, $B_{X'}(e)= B_X(e)$ and $\Bprime_{X'}(e)= \Bprime_X(e)$. Then $\mu(X)\leq \max_{X'\in \Lambda}\mu(X')$.
\end{lemma}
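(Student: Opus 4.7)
The plan is to derive an explicit formula for $\mu(X)$ as a function of the marginal distribution of $v_X$'s colour, and then invoke convexity. Set $c_0 = B_X(e_X)$, $c'_0 = \Bprime_X(e_X)$, and let $\alpha(c) = \pi^9_{B_X}(\calC(v_X) = c)$ and $\beta(c) = \pi^9_{\Bprime_X}(\calC(v_X) = c)$ denote the $v_X$-marginals of the two distributions. Since any coupling of $\alpha$ and $\beta$ extends to a coupling of $\pi^9_{B_X}$ and $\pi^9_{\Bprime_X}$ by sampling the remaining vertices independently from the conditional distributions, $\mu(X) = \dtv(\alpha,\beta)$.

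The first step exploits the fact that $B_X$ and $\Bprime_X$ differ only at $e_X$: they place identical constraints on every vertex of $R_X$ other than $v_X$, where one forbids $c_0$ and the other forbids $c'_0$. Let $N_c$ denote the number of proper $9$-colourings of $R_X$ with $\calC(v_X)=c$ agreeing with $B_X$ on $\calE R_X\setminus\{e_X\}$, and set $Z=|\Omega^9_{B_X}|=\sum_{c\neq c_0}N_c$ and $Z'=|\Omega^9_{\Bprime_X}|=\sum_{c\neq c'_0}N_c$. Then $\alpha(c_0)=\beta(c'_0)=0$ and $\alpha(c)=N_c/Z$, $\beta(c)=N_c/Z'$ for $c\notin\{c_0,c'_0\}$. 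Assuming WLOG $Z\geq Z'$ (equivalently $N_{c'_0}\geq N_{c_0}$), the identity $\dtv(\alpha,\beta) = 1 - \sum_c \min(\alpha(c),\beta(c))$ gives $\dtv(\alpha,\beta) = 1 - (Z-N_{c'_0})/Z = N_{c'_0}/Z = \alpha(c'_0)$; since $Z\geq Z'$ is also equivalent to $\alpha(c'_0)\geq \beta(c_0)$, this shows
\begin{equation*}
\mu(X) \;=\; \max\!\bigl(\alpha(c'_0),\,\beta(c_0)\bigr).
\end{equation*}

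The second step is a convex decomposition by conditioning on the colouring of $R_X\setminus R$. For each proper $9$-colouring $D$ of $R_X\setminus R$ consistent with $B_X$, let $B_X^D$ denote the partial colouring of $\calE R$ that equals $B_X$ on $\calE R\cap \calE R_X$ and equals $D(w)$ on every edge $\{w,u\}\in\calE R\setminus \calE R_X$ with $w\in R_X\setminus R$; define $\Bprime_X^D$ analogously. Since $e_X\in \calE R\cap \calE R_X$, the pair $(B_X^D,\Bprime_X^D)$ still differs only at $e_X$. The clockwise-adjacency condition for the resulting pair splits into two cases: for clockwise-adjacent edges in $\calE R\cap \calE R_X$ it is inherited from the condition on $X$ (the shared vertex lies in $\partial R\cap \partial R_X$); for clockwise-adjacent edges in $\calE R\setminus \calE R_X$ the shared vertex $w$ lies in $R_X\setminus R$ and both edges receive the same colour $D(w)$, so the condition holds trivially. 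Thus $(R,e_X,B_X^D,\Bprime_X^D)$ defines an edge-boundary pair $X'_D\in\Lambda$. Writing $q(D) = \pi^9_{B_X}(\calC|_{R_X\setminus R}=D)$, the law of total probability and convexity yield
\begin{equation*}
\alpha(c'_0) \;=\; \sum_D q(D)\,\pi^9_{B_X^D}(\calC(v_X)=c'_0) \;\leq\; \max_D \pi^9_{B_X^D}(\calC(v_X)=c'_0),
\end{equation*}
and applying the identity of the first step to each $X'_D$ yields $\alpha(c'_0)\leq \max_{X'\in\Lambda}\mu(X')$. The symmetric decomposition of $\beta$ via $\pi^9_{\Bprime_X}$ gives $\beta(c_0)\leq \max_{X'\in\Lambda}\mu(X')$, and combining the two bounds with the identity for $\mu(X)$ completes the proof.

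The main obstacle is establishing the identity $\mu(X)=\max(\alpha(c'_0),\beta(c_0))$ of the first step; without it, a direct coupling-based argument runs into trouble because the marginals of $\calC|_{R_X\setminus R}$ under $\pi^9_{B_X}$ and $\pi^9_{\Bprime_X}$ generally assign different probabilities to a given $D$, and an attempt to couple them introduces an additive $\dtv$-error that cannot be absorbed into $\max_{X'\in\Lambda}\mu(X')$. Once the explicit identity is in hand, the convexity step is elementary and each marginal is bounded separately, so the mismatch between the two conditional distributions never enters the estimate.
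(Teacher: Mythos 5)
The paper cites this lemma from~\cite{gmp-ssmfc-04} and gives only a one-sentence sketch of the idea (shrink the region and choose the newly exposed boundary colours so as to maximise $\mu$), so there is no in-text proof to compare against; your argument is correct and is a faithful, self-contained expansion of that sketch. Your Step-1 marginal identity for $\mu(X)$ is precisely Lemma~\ref{lem:mu-program} of the paper (there derived via an explicit optimal coupling rather than the $1-\sum_c\min$ total-variation formula, but the two are equivalent), and your Step-2 conditioning on $\calC|_{R_X\setminus R}$, together with the careful verification that each resulting $(R,e_X,B_X^D,\Bprime_X^D)$ satisfies the clockwise-adjacency and degree constraints and hence lies in $\Lambda$, is exactly the intended reduction.
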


In this article, the 39 regions $M_i$ illustrated in Figure~\ref{fig:M-regions} are of particular importance to us. In Section~\ref{sec:experimental} we discuss why we chose these regions.%
%
\begin{figure}[p]
    \centering
    \renewcommand{\arraystretch}{1.00}
    \newcommand{\diag}[1]{\showgraphics{m-regions/m#1}}
    \newcommand{\dlab}[1]{$M_{#1}$\vspace{0.10cm}}
    \begin{tabular*}{1.00\textwidth}{@{\extracolsep{\fill}} c c c c}
        \diag{01} & \diag{02} & \diag{03} & \diag{04} \\
        \dlab{1}  & \dlab{2}  & \dlab{3}  & \dlab{4} \\
        \diag{05} & \diag{06} & \diag{07} & \diag{08} \\
        \dlab{5}  & \dlab{6}  & \dlab{7}  & \dlab{8} \\
        \diag{09} & \diag{10} & \diag{11} & \diag{12} \\
        \dlab{9}  & \dlab{10} & \dlab{11} & \dlab{12} \\
        \diag{13} & \diag{14} & \diag{15} & \diag{16} \\
        \dlab{13} & \dlab{14} & \dlab{15} & \dlab{16} \\
        \diag{17} & \diag{18} & \diag{19} & \diag{20} \\
        \dlab{17} & \dlab{18} & \dlab{19} & \dlab{20} \\
    \end{tabular*}

    \bigskip

    Part 1 of 2
    \caption{\label{fig:M-regions}
        The regions $M_1,\dots,M_{39}$.}
\end{figure}
%
\begin{figure}[p]
    \addtocounter{figure}{-1}
    \centering
    \renewcommand{\arraystretch}{1.00}
    \newcommand{\diag}[1]{\showgraphics{m-regions/m#1}}
    \newcommand{\dlab}[1]{$M_{#1}$\vspace{0.10cm}}
    \begin{tabular*}{1.00\textwidth}{@{\extracolsep{\fill}} c c c c}
        \diag{21} & \diag{22} & \diag{23} & \diag{24} \\
        \dlab{21} & \dlab{22} & \dlab{23} & \dlab{24} \\
        \diag{25} & \diag{26} & \diag{27} & \diag{28} \\
        \dlab{25} & \dlab{26} & \dlab{27} & \dlab{28} \\
        \diag{29} & \diag{30} & \diag{31} & \diag{32} \\
        \dlab{29} & \dlab{30} & \dlab{31} & \dlab{32} \\
        \diag{33} & \diag{34} & \diag{35} & \diag{36} \\
        \dlab{33} & \dlab{34} & \dlab{35} & \dlab{36} \\
        \diag{37} & \diag{38} & \diag{39} & ~ \\
        \dlab{37} & \dlab{38} & \dlab{39} & ~ \\
    \end{tabular*}

    \bigskip

    Part 2 of 2
    \caption{The regions $M_{1},\dots,M_{39}$.}
\end{figure}
%
In order to upper-bound $\mu(X)$ for an arbitrary edge-boundary pair $X$, we will shrink the region $R_X$ down to one of the 39 regions and apply Lemma~\ref{lem:convexity}. In order to successfully shrink $R_X$ down to match an $M$-region, we might have to consider an appropriate rotation or reflection of the region.

For $i\in [39]$, we define the constants $\mu_i$ in Table~\ref{tab:mu-values} and prove the following lemma with the help of a computer. Details of the proof are given in Section~\ref{sec:computations}.

\begin{lemma}
    \label{lem:mu-values}
    For $i\in [39]$, $\mu(X)\leq \mu_i$ for every edge-boundary pair $X$ such that $R_X$ is the region $M_i$ in Figure~\ref{fig:M-regions} and $v_X$ and $w_X$ are the vertices labelled $v_M$ and $w_M$, respectively.
\end{lemma}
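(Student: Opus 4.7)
The plan is purely computational: for each region $M_i$ with its designated $v_M$ and $w_M$, we exhaustively verify the bound by enumerating (under symmetry) all valid edge-boundary pairs $X$ with $R_X=M_i$ and computing $\mu(X)$ exactly. The first step is to observe that $\mu(X)$ admits a tractable closed form. By definition $\Psi_X$ minimises $\expect{1_{\Psi,v_X}}$ among couplings of $\pi_{B_X}^9$ and $\pi_{\Bprime_X}^9$, and any such coupling restricted to $v_X$ is a coupling of the marginals. Conversely, any coupling of the marginals at $v_X$ can be extended to a full coupling of the distributions. Hence
\begin{equation*}
    \mu(X)\;=\;\dtv\!\bigl(\pi_{B_X}^9(\{v_X\}),\,\pi_{\Bprime_X}^9(\{v_X\})\bigr)
          \;=\;\tfrac{1}{2}\sum_{c\in[9]}\bigl|\Pr_{\pi_{B_X}^9}(\calC(v_X)=c)-\Pr_{\pi_{\Bprime_X}^9}(\calC(v_X)=c)\bigr|.
\end{equation*}
So for a given $(B_X,\Bprime_X)$ it suffices to count, separately, the proper $9$-colourings of $M_i$ agreeing with $B_X$ and with $\Bprime_X$, tabulated by the colour at $v_X$. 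Because each $M_i$ is small (the regions pictured in Figure~\ref{fig:M-regions} contain only a handful of vertices), this count can be performed by brute-force enumeration, or more efficiently by a transfer-matrix / tree-decomposition sweep.

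Next I would enumerate the edge-boundary pairs. The edges in $\calE R_X$ other than $e_X$ are partitioned into maximal clockwise-adjacent blocks around each boundary vertex $w\in\partial R_X\setminus\{w_X\}$; the edge-boundary-pair condition forces all edges in such a block to share one common colour in both $B_X$ and $\Bprime_X$ (since none of them is $e_X$). Around $w_X$ itself, the block containing $e_X$ may split into two monochromatic sub-blocks separated by $e_X$, one coloured $B_X(e_X)$ on one side and $\Bprime_X(e_X)$ on the other in $B_X$ (and symmetrically in $\Bprime_X$). Thus the degrees of freedom reduce to: one colour per block, plus the pair $(B_X(e_X),\Bprime_X(e_X))$. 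The $0$ colour is also allowed on blocks not touching $e_X$. Symmetry under the $S_9$ action on $[9]$ lets me fix $(B_X(e_X),\Bprime_X(e_X))=(1,2)$ and further canonicalise the remaining block colours by choosing a lexicographically minimal representative of the orbit of the stabiliser of $\{1,2\}$ (the group $S_7$ acting on $\{3,\dots,9\}$). This cuts the enumeration by a factor of $9!/2$ or so, which is typically the difference between feasible and infeasible.

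The algorithm is therefore: for each $i\in[39]$, iterate over all canonical boundary-edge colourings respecting the clockwise-adjacency constraint around every boundary vertex; for each, compute the two $v_X$-marginals by exhaustive enumeration of proper colourings of $M_i$; compute $\dtv$; and take the maximum. The value $\mu_i$ in Table~\ref{tab:mu-values} is verified when this maximum is at most $\mu_i$. The actual execution, together with precautions against floating-point error (e.g.\ carrying out the counts in exact integer arithmetic and the $\dtv$ in exact rationals, then rounding up), will be deferred to Section~\ref{sec:computations}.

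The main obstacle is purely scale: some of the regions $M_i$ have many boundary edges, so the number of canonical boundary configurations is large even after symmetry reduction, and each configuration requires enumerating proper $9$-colourings of $M_i$. Keeping this within reasonable running time is the practical hurdle, but there is no mathematical subtlety beyond the marginal-$\dtv$ identity above and the careful bookkeeping of the clockwise-adjacency constraint.
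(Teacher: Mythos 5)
Your proposal is correct and follows essentially the same route as the paper. Your identity $\mu(X)=\dtv\bigl(\pi_{B_X}^9(\{v_X\}),\pi_{\Bprime_X}^9(\{v_X\})\bigr)$ is exactly what the paper proves as its Lemma~\ref{lem:mu-program}, only there it is written out as $\max\{|\Omega|,|\Omega'|\}/(|\Omega_\text{both}|+\max\{|\Omega|,|\Omega'|\})$; the two are the same computation once one observes that the marginal at $v_X$ assigns probability zero to $B_X(e_X)$ (respectively $\Bprime_X(e_X)$). The rest — enumerating boundary-edge colourings, exploiting the clockwise-adjacency constraint to collapse the boundary into monochromatic blocks, fixing $(B_X(e_X),\Bprime_X(e_X))=(1,2)$ and canonicalising the remaining colours to cut the search space, and computing the per-colour colouring counts by a dynamic-programming sweep in exact integer arithmetic — is precisely what the paper's C programs do in Section~\ref{sec:computations} and Appendix~\ref{app:implementation}, including the same symmetry reductions (lex-minimal colour relabelling and swaps of interchangeable edges such as $c_{16}\leftrightarrow c_{17}$). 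The only substantive thing you leave implicit is that the enumeration must be feasible for each $M_i$ even though the boundary has up to eighteen or so edges, which the paper resolves with exactly the per-vertex DP you mention as an alternative to brute force; that is a practical rather than mathematical point, and you flag it correctly.
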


\begin{table}[t]
    \centering\renewcommand{\arraystretch}{1.15}
    \begin{tabular}{l l l}
    $\mu_{1}=68809973/310505657$ & $\mu_{14}=25/91$ &
                                            $\mu_{27}=21/73$ \\
    $\mu_{2}=11623551/51797443$ & $\mu_{15}=9334/40215$ &
                                            $\mu_{28}=2833/11551$ \\
    $\mu_{3}=456459/2005687$ & $\mu_{16}=11332/46633$ &
                                            $\mu_{29}=2833/11551$ \\
    $\mu_{4}=408609/1601573$ & $\mu_{17}=11332/46633$ &
                                            $\mu_{30}=620/2321$ \\
    $\mu_{5}=33/127$ & $\mu_{18}=7067/29188$ &
                                            $\mu_{31}=620/2321$ \\
    $\mu_{6}=18199/78779$ & $\mu_{19}=11332/46633$ &
                                            $\mu_{32}=688/2389$ \\
    $\mu_{7}=75312/325193$ & $\mu_{20}=775/2941$ &
                                            $\mu_{33}=5/17$ \\
    $\mu_{8}=14165/58613$ & $\mu_{21}=4248/16015$ &
                                            $\mu_{34}=4/13$ \\
    $\mu_{9}=70661/293514$ & $\mu_{22}=21/73$ &
                                            $\mu_{35}=4/13$ \\
    $\mu_{10}=31648/123341$ & $\mu_{23}=5/17$ &
                                            $\mu_{36}=4/13$ \\
    $\mu_{11}=2655/10063$ & $\mu_{24}=5/17$ &
                                            $\mu_{37}=5/17$ \\
    $\mu_{12}=521/1853$ & $\mu_{25}=5/17$ &
                                            $\mu_{38}=4/13$ \\
    $\mu_{13}=208/757$ & $\mu_{26}=32/113$ &
                                            $\mu_{39}=1/3$
\end{tabular}
\caption{
    $\mu_i$ is $\mu(X)$ maximised over all edge-boundary pairs $X$ whose region $R_X$ is the region $M_i$ in Figure~\ref{fig:M-regions}.}
\label{tab:mu-values}
\end{table}


%% ================================================================== %%

\subsection{Regions and sets of edge-boundary pairs}

%Let $M_1,\dots,M_{39}$ be the regions illustrated in Figure~\ref{fig:M-regions}. For each region $M_i$, we label a vertex $v_M\in M_i$ and $w_M\in \partial M_i$.

Let $F$ be the region in Figure~\ref{fig:f}, where a vertex $v_F\in F$ and $w_F\in \partial F$ are labelled. Since $F$ contains 12 vertices, we define $F_1,\dots,F_{2048}$ to be the $2^{11}=2048$ distinct subregions of $F$ that all contain the vertex labelled $v_F$. For $i\in [2048]$, we define $U_i$ to be the set of edge-boundary pairs $X$ such that the intersection of $R_X$ and $F$ is $F_i$, where $v_X$ and $w_X$ coincide with the vertices labelled $v_F$ and $w_F$, respectively. Thus, for any edge-boundary pair $X$, there is a unique $i\in[2048]$ such that $X\in U_i$.
%
\begin{figure}[t]
    \begin{minipage}[t]{0.47\linewidth}
        \centering
        \showgraphics{f}
        \caption{\label{fig:f}The region $F$.}
    \end{minipage}% %Do not remove the % after \end{minipage}
    \hfill
    \begin{minipage}[t]{0.47\linewidth}
        \centering
        \showgraphics{g}
        \caption{\label{fig:g}The region $G$. The vertex $w_G$ does not belong to $G$.}
    \end{minipage}
\end{figure}

Let $G$ be the region in Figure~\ref{fig:g}, where a vertex $v_G\in G$ and $w_G\in \partial G$ are labelled. The vertex $w_G$ is a ``hole'' in $G$. We define $\calG$ to be the set of subregions $G'$ of $G$ such that $G'$ contains $v_G$ and at least one neighbour of $w_G$ is not in $G'$.
%The reason why we insist that at least one neighbour of $w_G$ is not in $G'$ is to reduce the size of the set $\calG$.
Recall that in the definition of an edge-boundary pair $X$, $w_X$ has at most five neighbours in $R_X$.

We define a function $\Phi:\calG\rightarrow [39]\times \{0,\dots,2048\}^6$.
Suppose $G'\in \calG$ is a region. Then $\Phi(G')= (m,b_0,\dots,b_5)$ where $m$, $b_0,\dots,b_5$ are uniquely specified as follows.
Let $\calM\subseteq \{M_1,\dots,M_{39}\}$ be the set of regions $M_i$ such that $M_i$ (or the reflection of $M_i$) is a subregion of $G'$, where $v_G$ and $w_G$ coincide with $v_M$ and $w_M$, respectively. $M_m$ is a region in $\calM$ such that $\mu_m\leq \mu_i$ for all $M_i\in \calM$.
When well defined,
$F_{b_0},\dots,F_{b_5}$ are the intersections of $G'$ and $F$ taken according to Figure~\ref{fig:intersect-a},\dots,\ref{fig:intersect-f}, respectively.
If $F_{b_i}$ is not well defined then we set $b_i=0$.
For example, if the vertex above $v_G$ in region $G$ (Figure~\ref{fig:g}) is not in $G'$ then none of the regions $F_1,\dots,F_{2048}$ is the intersection of $G'$ and $F$ in Figure~\ref{fig:intersect-d}, hence $b_3=0$.
%
\begin{figure}[t]
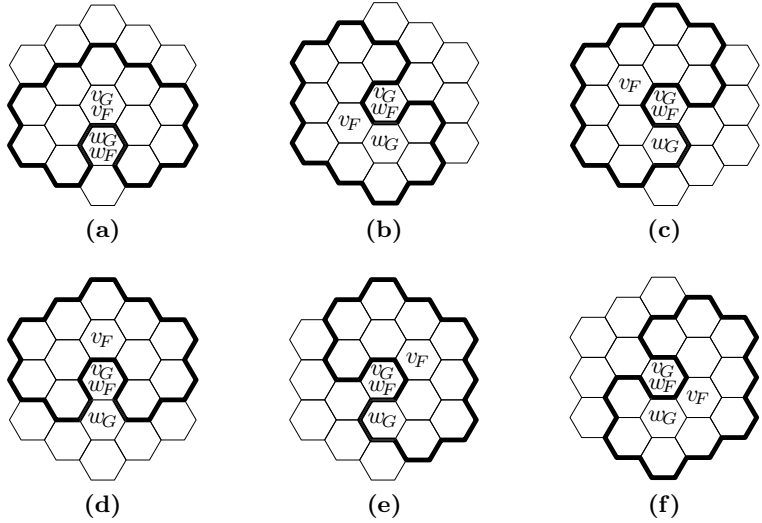

  \centering
  ~
  \hfill
  \subfloat[][]{\label{fig:intersect-a}\showgraphics{gf-intersection1}}
  \hfill
  \subfloat[][]{\label{fig:intersect-b}\showgraphics{gf-intersection2}}
  \hfill
  \subfloat[][]{\label{fig:intersect-c}\showgraphics{gf-intersection3}}
  \hfill
  ~

  ~
  \hfill
  \subfloat[][]{\label{fig:intersect-d}\showgraphics{gf-intersection4}}
  \hfill
  \subfloat[][]{\label{fig:intersect-e}\showgraphics{gf-intersection5}}
  \hfill
  \subfloat[][]{\label{fig:intersect-f}\showgraphics{gf-intersection6}}
  \hfill
  ~
  \caption{Intersections of region $G$ and $F$.}
  \label{fig:intersections}
\end{figure}


%% ================================================================== %%

\subsection{Upper bounds}

We use a computer to prove the next lemma.

\begin{lemma}
    \label{lem:alphas}
    There exist constants $\alpha_1,\dots,\alpha_{2048}\in [2,6]$ such that, for every region $G'\in \calG$,
    %
    \begin{equation}
        \label{eq:inequality}
        \mu_m(\alpha_{b_1}+\cdots+\alpha_{b_5})\leq
            \alpha_{b_0}(1-\epsilon)\,,
    \end{equation}
    %
    where $(m,b_0,\dots,b_5)=\Phi(G')$, $\alpha_0=0$, $\epsilon=1/1000$ and the values $\mu_m$ are from Table~\ref{tab:mu-values}.
\end{lemma}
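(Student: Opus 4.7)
The lemma is a finite linear-feasibility statement: we must exhibit weights $\alpha_1,\dots,\alpha_{2048}\in [2,6]$ satisfying one linear inequality for each $G'\in\calG$. Since $G$ has a fixed, small number of vertices, $\calG$ is a finite explicit list, and the plan is to reduce to a finite linear system and solve it by computer. First I would enumerate $\calG$ directly: iterate over all subsets of $V(G)\setminus\{v_G\}$, keep those subsets $G'$ for which at least one neighbour of $w_G$ is missing, and for each such $G'$ compute $\Phi(G') = (m,b_0,\dots,b_5)$ from its definition. Computing $b_0,\dots,b_5$ is a direct intersection of $G'$ with the six positioned copies of $F$ in Figure~\ref{fig:intersections} and a table lookup against the $2048$ subregions $F_i$ (setting $b_i=0$ where no intersection is well defined); computing $m$ requires checking which of the $39$ regions $M_1,\dots,M_{39}$ (including their reflections) embed into $G'$ with $v_M,w_M$ aligned to $v_G,w_G$, and picking the one of minimum $\mu_m$.

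Once $\Phi$ has been tabulated, each $G'\in\calG$ yields one linear inequality of the form
\begin{equation*}
    \alpha_{b_0} \geq \frac{\mu_m}{1-\epsilon}\,(\alpha_{b_1}+\cdots+\alpha_{b_5})
\end{equation*}
in the unknowns $\alpha_1,\dots,\alpha_{2048}$, with the convention $\alpha_0 = 0$. I would produce candidate values for the $\alpha_i$ by a monotone fixed-point iteration: initialise $\alpha_i^{(0)} = 6$ for all $i$, and set $\alpha_i^{(t+1)}$ equal to the maximum of the right-hand side above over all $G'\in\calG$ with $\Phi(G')_0 = i$. Provided the initial assignment is feasible for $\epsilon=0$, the sequence is monotonically non-increasing and bounded below, so it converges to a fixed point $\alpha^\ast$. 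If $\alpha_i^\ast\in[2,6]$ for all $i$, then $\alpha^\ast$ witnesses the lemma. Alternatively, one could set up a linear program (minimise $\sum_i\alpha_i$ subject to the inequalities and $\alpha_i\in[2,6]$) and read the solution.

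The final step is rigourous verification. The iteration (or LP) only produces candidate rationals; to make the proof rigourous I would carry the computation out in exact rational arithmetic and, after fixing the candidate vector $\alpha$, re-check all $|\calG|$ inequalities in exact arithmetic. This verification step is what appears in the paper as the computer-assisted part of the proof, and its correctness depends only on the enumeration of $\calG$ and on the values $\mu_m$ from Table~\ref{tab:mu-values}.

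The main obstacle is not the iteration itself but the scale and bookkeeping of the enumeration: $|\calG|$ is exponential in $|G|$, each inequality requires computing six intersections with $F$ and searching for the best of $39$ (up to reflection) embeddings of $M$-regions, and the numerical slack $\epsilon = 1/1000$ leaves little room, so the calculation must be carried out exactly. The very existence of a feasible $\alpha \in [2,6]^{2048}$ is delicate: it hinges on the fact that the regions $M_1,\dots,M_{39}$ have been chosen well enough that the minimised $\mu_m$ is small whenever the relevant $b_j$'s force a large right-hand side. If the iteration failed to terminate inside $[2,6]$, one would need to enlarge the family of $M$-regions (tightening Lemma~\ref{lem:mu-values}) rather than tamper with $\epsilon$ or with the definition of $\Phi$.
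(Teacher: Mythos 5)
Your proposal is correct and matches the paper's approach: enumerate $\calG$, compute $\Phi(G')$, assemble the resulting linear constraints into a linear program, solve it (the paper uses GLPK) to obtain candidate rationals $\alpha_1,\dots,\alpha_{2048}\in[2,6]$, and then re-check every inequality in exact rational arithmetic, which is what the Python verifier \texttt{lemma7.py} does. One small technical slip in the monotone-iteration variant: for the map $\alpha\mapsto\max\frac{\mu_m}{1-\epsilon}(\alpha_{b_1}+\cdots+\alpha_{b_5})$ to give a non-increasing sequence from the all-sixes start, that start must be feasible for the \emph{actual} constraints with $\epsilon=1/1000$ (not merely for $\epsilon=0$, which gives $T(\alpha)\leq\alpha/(1-\epsilon)$ and can increase); this does not affect your argument since you also offer the LP route and correctly place the burden of proof on the final exact-arithmetic verification.
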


The constants $\alpha_i$ in the lemma above have been obtained in the following way. We have written a computer program which goes through the regions $G'$ of $\calG$, calculates the values $m,b_0,\dots,b_5$ and adds Equation~(\ref{eq:inequality}) above to a linear program. Once all regions in $\calG$ have been processed, the linear program contains thousands of inequalities. We then use a linear program solver to successfully find a satisfying solution such that each value $\alpha_i$ is in the interval $[2,6]$. See Section~\ref{sec:computations} for more information on the computational part.

\begin{lemma}
    \label{lem:gamma-decay}
    There exist constants $\alpha_1,\dots,\alpha_{2048}\in [2,6]$ such that,
    for $i\in [2048]$ and $d\geq 1$, $\Gamma_d(U_i)\leq \alpha_i(1-\epsilon)^d$, where $\epsilon= 1/1000$.
\end{lemma}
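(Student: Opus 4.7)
The plan is to prove this by induction on $d$, taking the $\alpha_i$ to be the constants supplied by Lemma~\ref{lem:alphas}.

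For the base case $d=1$, I would observe that $\Gamma_1(X) = \mu(X)$ for any $X \in U_i$. Applying Lemma~\ref{lem:convexity} to shrink $R_X$ down to a suitable $M$-region inside $R_X \cap G$ (possibly after a reflection) and then invoking Lemma~\ref{lem:mu-values} gives $\mu(X) \leq \max_j \mu_j \leq 1/3$. Since every $\alpha_i \geq 2$ and $\epsilon = 1/1000$, we have $\alpha_i(1-\epsilon) \geq 2 \cdot 999/1000 > 1/3$, so $\Gamma_1(X) \leq \alpha_i(1-\epsilon)$ holds trivially.

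For the inductive step, assume the bound holds at level $d-1$ for every index in $[2048]$. Fix $X \in U_i$, set $G' = R_X \cap G$, and let $(m, b_0, \ldots, b_5) = \Phi(G')$; by the definitions of $U_i$ and $\Phi$, one has $b_0 = i$. From the recursive definition of $\Gamma_d$,
$$\Gamma_d(X) = \mathop{\sum_{c, c' \in [9]}}_{c \neq c'} p_X(c,c') \sum_{i'=1}^{|E_X|} \Gamma_{d-1}(X_{i'}(c,c')).$$
The key claim is that, regardless of $(c, c')$, each child edge-boundary pair $X_{i'}(c,c')$ lies in $U_{b_{j(i')}}$, where $j(i') \in \{1,\ldots,5\}$ is the index labelling the neighbour of $v_X$ across $e_{i'}$. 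This holds because $v_{X_{i'}(c,c')}$ is that neighbour, $w_{X_{i'}(c,c')} = v_X$, and $R_{X_{i'}(c,c')} = R_X \setminus \{v_X\}$: the relevant placement of $F$ around the new $v$-vertex is contained in the placement of $G$ around $v_X$ and does not contain $v_X$ itself, so $R_{X_{i'}(c,c')} \cap F$ coincides with $G' \cap F$, which is $F_{b_{j(i')}}$ by the definition of $\Phi$. Indices $j \in \{1, \ldots, 5\}$ whose neighbours lie outside $R_X$ contribute $b_j = 0$ and $\alpha_0 = 0$, so reindexing the inner sum by neighbour and extending to all five neighbours leaves it unchanged. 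Using the induction hypothesis together with $\sum_{c \neq c'} p_X(c,c') = \mu(X)$,
$$\Gamma_d(X) \leq \mu(X)(\alpha_{b_1} + \cdots + \alpha_{b_5})(1-\epsilon)^{d-1}.$$
A further application of Lemma~\ref{lem:convexity} shrinks $R_X$ to $M_m \subseteq G'$, giving $\mu(X) \leq \mu_m$ by Lemma~\ref{lem:mu-values}, and Lemma~\ref{lem:alphas} then supplies $\mu_m(\alpha_{b_1} + \cdots + \alpha_{b_5}) \leq \alpha_{b_0}(1-\epsilon) = \alpha_i(1-\epsilon)$, whence $\Gamma_d(X) \leq \alpha_i(1-\epsilon)^d$. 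Maximising over $X \in U_i$ finishes the induction.

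The main obstacle is the geometric bookkeeping in the inductive step: one must verify that $G$ is large enough to contain all six placements of $F$ (around $v_X$ and each of the five non-$w_X$ neighbours), and that the indexing convention of $\Phi$ aligns $b_0$ with the $v_X$-centered placement and $b_1, \ldots, b_5$ with the five neighbour placements in a way that is consistent with the clockwise ordering of edges in $E_X$. Once the alignment is verified, the inequality from Lemma~\ref{lem:alphas} has been designed precisely to close the induction with slack $1-\epsilon$.
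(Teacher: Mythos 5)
Your proposal is correct and follows essentially the same inductive argument as the paper's proof; the only stylistic difference is in the base case, where the paper uses the trivial bound $\mu(X)\leq 1$ (immediate, since $\mu(X)$ is the expectation of an indicator random variable) rather than routing through Lemmas~\ref{lem:convexity} and~\ref{lem:mu-values} to obtain $\mu(X)\leq 1/3$ as you do. Both reach the same conclusion $\Gamma_1(X) < 2(1-\epsilon)\leq\alpha_i(1-\epsilon)$, and your inductive step (reindexing the children $X_{i'}(c,c')$ into the classes $U_{b_1},\dots,U_{b_5}$ via $\Phi$, then invoking Lemma~\ref{lem:alphas}) matches the paper's chain of inequalities almost line for line.
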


\begin{proof}
    Let $\alpha_1,\dots,\alpha_{2048}\in [2,6]$ be constants satisfying the inequalities in Lemma~\ref{lem:alphas}.
    For $i\in [2048]$, suppose $X\in U_i$. We use induction on $d$ to show that $\Gamma_d(X)\leq \alpha_i(1-\epsilon)^d$.

    For the base case $d=1$, $\Gamma_1(X)= \mu(X)\leq 1< 2(1-\epsilon)\leq \alpha_i(1-\epsilon)$.

    For the inductive step $d>1$, let $G'$ be the intersection of $R_X$ and $G$ such that $v_X$ and $w_X$ coincide with $v_G$ and $w_G$, respectively.
    Hence $G'\in \calG$.

    Let $(m,b_0,\dots,b_5)= \Phi(G')$ and note that $b_0=i$. Since $M_m$ is a subregion of $R_X$, it follows from Lemmas~\ref{lem:convexity} and~\ref{lem:mu-values} that $\mu(X)\leq \mu_m$.

    Recall that $E_X$ is the set of edges between $v_X$ and a vertex in $R_X$. For $r\in \{1,\dots,|E_X|\}$ and $(c,c')\in [9]\times [9]$ such that $c\neq c'$, let $X_r(c,c')$ be the edge-boundary pairs defined in the construction of the tree $T_X$ in Section~\ref{sec:recursive}.
    From the definition of $\Phi$, the intersection of $R_{X_r(c,c')}$ and $F$, taken such that $v_{X_r(c,c')}$ and $w_{X_r(c,c')}$ coincide with $v_F$ and $w_F$, respectively, is $F_{b_k}$, where the value of $k$ depends on the edge $e_r\in E_X$. Thus, with $U_0=\emptyset$,
    %
    \begin{equation}
        \label{eq:U-sets}
        \sum_{r=1}^{|E_X|} \Gamma_{d-1}(X_r(c,c'))\leq
            \sum_{k=1}^{5} \Gamma_{d-1}(U_{b_k})\,.
    \end{equation}

    We have
    %
    \begin{align*}
    \Gamma_d(X) &=      \mathop{\sum_{c,c'\in [9]}}_{c\neq c'} p_X(c,c')
                        \sum_{r=1}^{|E_X|}
                        \Gamma_{d-1}(X_r(c,c'))
                            &~ \text{(Definition of $\Gamma_d(X)$)}\\
                &\leq   \mathop{\sum_{c,c'\in [9]}}_{c\neq c'} p_X(c,c')
                        \sum_{k=1}^{5} \Gamma_{d-1}(U_{b_k})
                            &~ \text{(Equation~(\ref{eq:U-sets}))}\\
                &=      \mu(X)
                        \sum_{k=1}^{5} \Gamma_{d-1}(U_{b_k})
                            &~ \text{(Definition of $\mu(X)$)}\\
                &\leq   \mu(X)
                        \sum_{k=1}^{5} \alpha_{b_k}(1-\epsilon)^{d-1}
                            &~ \text{(Induction hypothesis)}\\
                &\leq   \mu_m (\alpha_{b_1}+\dots + \alpha_{b_5})
                        (1-\epsilon)^{d-1}
                            &~ \text{($\mu(X)\leq \mu_m$)}\\
               &\leq   \alpha_{i}(1-\epsilon)^d\,.
                            &~ \text{(Lemma~\ref{lem:alphas} and $b_0=i$)}
    \end{align*}
\end{proof}

The next corollary follows immediately from Lemma~\ref{lem:gamma-decay}.

\begin{corollary}
    \label{cor:gamma-decay}
    For every edge-boundary pair $X$ and $d\geq 1$,
    $\Gamma_d(X)\leq 5(1-\epsilon)^d$, where $\epsilon= 1/1000$.
\end{corollary}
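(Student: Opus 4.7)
The plan is to deduce the corollary as an essentially one-line consequence of Lemma~\ref{lem:gamma-decay}. The first step is to argue that the families $U_1,\dots,U_{2048}$ together cover every edge-boundary pair. Given an arbitrary edge-boundary pair $X$, I would overlay the template region $F$ of Figure~\ref{fig:f} (rotating or reflecting if necessary) so that the distinguished vertex $v_F$ coincides with $v_X$ and $w_F$ with $w_X$; the intersection $R_X\cap F$ is then one of the $2^{11}=2048$ subregions of $F$ containing $v_F$, i.e.\ some $F_i$. Consequently there is a unique $i\in[2048]$ with $X\in U_i$, and by definition $\Gamma_d(X)\le \Gamma_d(U_i)$.

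Invoking Lemma~\ref{lem:gamma-decay} for this $i$ gives $\Gamma_d(X)\le \alpha_i(1-\epsilon)^d$. To sharpen the nominal bound of $6(1-\epsilon)^d$ coming from $\alpha_i\in[2,6]$ down to $5(1-\epsilon)^d$, I would rely on the specific numerical solution to the linear program described in Section~\ref{sec:computations}: the $\alpha_i$ actually returned by the LP all satisfy $\alpha_i\le 5$. If for some reason the solver returned a value in $(5,6]$ for some $i$, a harmless alternative is to append the explicit constraints $\alpha_i\le 5$ to the LP of Lemma~\ref{lem:alphas} and re-verify feasibility with the computer.

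I do not anticipate any real obstacle here. All the substantive work has already been absorbed into the earlier results — Lemma~\ref{lem:cost} (constructing the recursive coupling), Lemma~\ref{lem:convexity} (reducing to the 39 template regions), Lemma~\ref{lem:mu-values} (the computer-assisted bounds on $\mu(X)$), and Lemma~\ref{lem:alphas} combined with the inductive proof of Lemma~\ref{lem:gamma-decay} (packaging everything into an exponentially decaying bound on $\Gamma_d$). The corollary itself is just a uniform version of Lemma~\ref{lem:gamma-decay} obtained by replacing the index-dependent constant $\alpha_i$ with a single upper bound valid for every edge-boundary pair. The only verification step is numerical, namely confirming $\max_i \alpha_i \le 5$ in the LP solution.
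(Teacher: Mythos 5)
Your proposal follows exactly the same route as the paper: the paper's own proof is the single sentence ``The next corollary follows immediately from Lemma~\ref{lem:gamma-decay}.'' Your first step (every edge-boundary pair lies in a unique $U_i$) is already established in Section~\ref{sec:exponential-decay}, so the deduction $\Gamma_d(X)\le\Gamma_d(U_i)\le\alpha_i(1-\epsilon)^d$ is immediate.

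What you have done \emph{beyond} the paper is put your finger on a genuine gap in the ``follows immediately'' claim: Lemma~\ref{lem:gamma-decay} only asserts $\alpha_i\in[2,6]$, which yields $\Gamma_d(X)\le 6(1-\epsilon)^d$, not the stated $5(1-\epsilon)^d$. The paper never records that the particular LP solution used actually satisfies $\max_i\alpha_i\le 5$, and no argument internal to the lemma statement gives you the factor 5. Your two proposed remedies — inspect the LP output and confirm $\max_i\alpha_i\le 5$, or add $\alpha_i\le 5$ to the LP and re-run feasibility — are both legitimate, and the second is the cleaner one if it goes through. Note, though, that the constant in the corollary is not load-bearing: replacing 5 by 6 would only change the constants in Lemmas~\ref{lem:edge-decay} and~\ref{lem:vertex-decay} and in the $\epsilon$-coupling-cover calculation, all of which still close with a slightly larger numerator, so the strong-spatial-mixing and rapid-mixing theorems are unaffected. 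Your proposal is therefore correct, takes the same approach as the paper, and additionally supplies a justification for a step the paper left implicit.
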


\begin{lemma}
    \label{lem:edge-decay}
    For every edge-boundary pair $X$ and subregion $R\subseteq R_X$ there exists a coupling $\Psi$ of $\pi^9_{B_X}$ and $\pi^9_{B'_X}$ such that
    %
    \begin{equation*}
        \sum_{v\in R} \expect{1_{\Psi_i,v}} \leq
            \frac{5}{\epsilon} (1-\epsilon)^{d_{R_X}\!(w_X,R)}\,,
    \end{equation*}
    %
    where $\epsilon = 1/1000$.
\end{lemma}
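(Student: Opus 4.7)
The plan is to combine Lemma~\ref{lem:gamma-bounded} with Corollary~\ref{cor:gamma-decay} and then evaluate a geometric series. Both ingredients are already in place, so this lemma is essentially a clean packaging step: Lemma~\ref{lem:gamma-bounded} gives a coupling whose total discrepancy over $R$ is bounded by the tail $\sum_{d\geq d_{R_X}(w_X,R)} \Gamma_d(X)$, and Corollary~\ref{cor:gamma-decay} tells us that each term in this tail decays like $5(1-\epsilon)^d$.

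Concretely, I would first invoke Lemma~\ref{lem:gamma-bounded} to produce a coupling $\Psi$ of $\pi^9_{B_X}$ and $\pi^9_{B'_X}$ with
\begin{equation*}
    \sum_{v\in R} \expect{1_{\Psi,v}} \;\leq\; \sum_{d\geq d_{R_X}\!(w_X,R)} \Gamma_d(X).
\end{equation*}
Next, substitute the uniform bound $\Gamma_d(X)\leq 5(1-\epsilon)^d$ from Corollary~\ref{cor:gamma-decay} into every term of the tail. Writing $D=d_{R_X}(w_X,R)$ and summing the resulting geometric series,
\begin{equation*}
    \sum_{d\geq D} 5(1-\epsilon)^d \;=\; 5\cdot \frac{(1-\epsilon)^{D}}{1-(1-\epsilon)} \;=\; \frac{5}{\epsilon}(1-\epsilon)^{D},
\end{equation*}
which is exactly the desired upper bound.

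There is no real obstacle here; the only things to double-check are that $D\geq 1$ (so the geometric sum starts at a finite index, which it does since $w_X\in\partial R_X$ and $R\subseteq R_X$, giving $d_{R_X}(w_X,R)\geq 1$) and that the coupling supplied by Lemma~\ref{lem:gamma-bounded} is indeed a coupling of $\pi^9_{B_X}$ and $\pi^9_{B'_X}$, which is exactly its statement. Nothing further is needed.
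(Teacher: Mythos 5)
Your proof is correct and is exactly the route the paper takes: the paper's proof is the one-line remark that the lemma "follows from Lemma~\ref{lem:gamma-bounded} and Corollary~\ref{cor:gamma-decay}," and you have simply spelled out the geometric-series summation that justifies that remark.
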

\begin{proof}
    Follows from Lemma~\ref{lem:gamma-bounded} and Corollary~\ref{cor:gamma-decay}.
\end{proof}

%% ================================================================== %%

\subsection{Vertex-boundary pairs and strong spatial mixing}

Similarly to Lemma~\ref{lem:edge-decay}, we prove the following lemma for vertex-boundary pairs.

\begin{lemma}
    \label{lem:vertex-decay}
    For every vertex-boundary pair $X$ and subregion $R\subseteq R_X$ there exists a coupling $\Psi$ of $\pi_{\calB_{X}}^9$ and $\pi_{\calBprime_{X}}^9$ such that
    %
    \begin{equation*}
        \sum_{v\in R} \expect{1_{\Psi,v}} \leq
            \frac{50}{\epsilon (1-\epsilon)}
            (1-\epsilon)^{d_{R_X}\!(w_X,R)}\,,
    \end{equation*}
    %
    where $\epsilon = 1/1000$.
\end{lemma}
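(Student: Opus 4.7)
The plan is to reduce the vertex-boundary-pair setting of the lemma to the edge-boundary-pair setting of Lemma~\ref{lem:edge-decay} via a hybrid coupling argument that propagates the discrepancy at $w_X$ through the (at most six) edges of $\calE R_X$ incident to $w_X$, one edge at a time.

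First I collect $f_1,\dots,f_k$ ($k\leq 6$), the edges of $\calE R_X$ incident to $w_X$, in clockwise order around $w_X$, and set $c=\calB_X(w_X)$, $c'=\calBprime_X(w_X)$. On $\calE R_X$ the edge colourings induced by $\calB_X$ and $\calBprime_X$ agree outside $\{f_1,\dots,f_k\}$ and give colours $c$ and $c'$ respectively on those $k$ edges. Interpolate with edge colourings $B^{(0)},\dots,B^{(k)}$ of $\calE R_X$, where $B^{(j)}$ assigns colour $c'$ to $f_1,\dots,f_j$, colour $c$ to $f_{j+1},\dots,f_k$, and the common induced colour on every other edge; note that $B^{(0)}$ is the $\calB_X$-induced colouring and $B^{(k)}$ is the $\calBprime_X$-induced colouring.

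For $k\leq 5$, each consecutive pair $(B^{(j-1)},B^{(j)})$ together with $R_X$ and distinguished edge $f_j$ forms a valid edge-boundary pair $Y_j$: the at-most-five-neighbours condition holds by the bound on $k$, and the clockwise-adjacency condition is immediate because every edge of $\calE R_X$ sharing a boundary vertex $w\neq w_X$ inherits that vertex's single colour in both $B^{(j-1)}$ and $B^{(j)}$, while around $w_X$ any failure of agreement between clockwise adjacent edges necessarily involves the distinguished edge $f_j$, as permitted by the definition. Lemma~\ref{lem:edge-decay} applied to $Y_j$ supplies a coupling $\Psi_j$ of $\pi^9_{B^{(j-1)}}$ and $\pi^9_{B^{(j)}}$ with $\sum_{v\in R}\expect{1_{\Psi_j,v}}\leq \frac{5}{\epsilon}(1-\epsilon)^{d_{R_X}(w_X,R)}$, using that $w_{Y_j}=w_X$ and $R_{Y_j}=R_X$. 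I then compose the $\Psi_j$'s into a single coupling $\Psi$ of $\pi^9_{\calB_X}=\pi^9_{B^{(0)}}$ and $\pi^9_{\calBprime_X}=\pi^9_{B^{(k)}}$ by sequentially sampling through the intermediate distributions $\pi^9_{B^{(j)}}$ using conditional couplings. A union bound on the indicators (any $v$ witnessing a discrepancy in $\Psi$ must witness one in some $\Psi_j$) gives
$$\sum_{v\in R}\expect{1_{\Psi,v}}\leq\sum_{j=1}^{k}\sum_{v\in R}\expect{1_{\Psi_j,v}}\leq\frac{5k}{\epsilon}(1-\epsilon)^{d_{R_X}(w_X,R)}\leq\frac{25}{\epsilon}(1-\epsilon)^{d_{R_X}(w_X,R)},$$
which is safely inside the target bound.

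The main obstacle is the residual case $k=6$, in which $w_X$ is surrounded by $R_X$ and the at-most-five-neighbours condition fails for every candidate $Y_j$. I would dispatch this by peeling off one neighbour $u$ of $w_X$, passing to $R^-=R_X\setminus\{u\}$ in which $w_X$ has only five neighbours. Coupling the $u$-marginals of $\pi^9_{\calB_X}$ and $\pi^9_{\calBprime_X}$ first and conditioning on $u$ receiving a common colour $c_u$ reduces the residual distributions on $R^-$ to those of a vertex-boundary pair with $w_X$ having five neighbours in the region, to which the $k\leq 5$ analysis applies; the distance $d_{R^-}(w_X,R\setminus\{u\})$ drops by at most one relative to $d_{R_X}(w_X,R)$, and the resulting loss of a factor $(1-\epsilon)^{-1}$ together with the slack in going from $\frac{25}{\epsilon}$ to the stated $\frac{50}{\epsilon(1-\epsilon)}$ is precisely the budget needed to absorb the auxiliary $u$-coupling. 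The genuinely delicate step is controlling the contribution from $u$ itself so that it too decays like $(1-\epsilon)^{d_{R_X}(w_X,R)}$ rather than being merely $O(1)$; I would do this by observing that any discrepancy at $u$ is mediated by the single edge $\{w_X,u\}$ and can be bounded via Lemma~\ref{lem:edge-decay} applied to an edge-boundary pair obtained by shrinking $R_X$ so that $w_X$ has at most five neighbours, with Lemma~\ref{lem:convexity} relating the shrunken $\mu$-quantities to those of the original pair.
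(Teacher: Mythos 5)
Your reduction of the vertex discrepancy to a chain of edge discrepancies for the case where $w_X$ has at most five neighbours in $R_X$ is essentially the paper's argument: the interpolating colourings $B^{(0)},\dots,B^{(k)}$ are exactly the paper's edge-boundary pairs $X_1,\dots,X_k$, the verification that the clockwise-adjacency condition holds is correct, and the composition-plus-union-bound step yields the intermediate bound $\tfrac{25}{\epsilon}(1-\epsilon)^{d_{R_X}(w_X,R)}$ as in the paper.

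The case where all six neighbours of $w_X$ lie in $R_X$ is a genuine gap in your proposal. The phrase ``coupling the $u$-marginals\dots and conditioning on $u$ receiving a common colour $c_u$'' does not describe a complete coupling of $\pi^9_{\calB_X}$ and $\pi^9_{\calBprime_X}$: the two marginal distributions of $u$'s colour may differ, so agreement cannot be forced, and the event where $u$'s colours disagree is precisely where the difficulty lives. Your subsequent remark about bounding ``the contribution from $u$ itself'' misses the real problem: a discrepancy at $u$ propagates into all of $R^-$, not merely to the vertex $u$. There is also a sign error in your distance bookkeeping: $d_{R^-}(w_X,R\setminus\{u\})$ cannot drop under vertex deletion (it can only grow); the distance that drops by one is $d_{R_u}(u,R\setminus\{u\})$, since $u$ is adjacent to $w_X$. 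Finally, the vague fallback via Lemma~\ref{lem:convexity} and shrinking is not developed — that lemma controls $\mu(\cdot)$ of edge-boundary pairs, not the full coupling cost, and does not plug in where you want it.

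What the paper actually does in this case is cleaner and accounts for the constants precisely: draw $\calC,\calC'$ independently from $\pi^9_{\calB_X}$ and $\pi^9_{\calBprime_X}$, record the (possibly different) colours $\calC(u)$ and $\calC'(u)$ of a chosen neighbour $u$, pass to $R_u=R_X\setminus\{u\}$ where both $w_X$ and $u$ are boundary vertices, and then couple the conditional distributions on $R_u$ via a chain of at most $10$ edge-boundary pairs — the boundary edges of $R_u$ incident to $w_X$ (at most $5$) followed by those incident to $u$ (at most $5$), with the interpolating edge-colours walking from $(\calB_X(w_X),\calC(u))$ to $(\calBprime_X(w_X),\calC'(u))$. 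Each of those ten edge-boundary pairs has its distinguished boundary vertex at distance at least $d_{R_X}(w_X,R)-1$ from $R\setminus\{u\}$ (because $u$ is adjacent to $w_X$), so Lemma~\ref{lem:edge-decay} gives each a cost of at most $\tfrac{5}{\epsilon}(1-\epsilon)^{d_{R_X}(w_X,R)-1}$, and summing $10$ of these yields exactly $\tfrac{50}{\epsilon(1-\epsilon)}(1-\epsilon)^{d_{R_X}(w_X,R)}$. The factor $2$ you attribute to an unexplained ``auxiliary $u$-coupling'' is precisely the doubling from $5$ edges at $w_X$ to $10$ edges at $w_X$ and $u$ combined; this is the idea your proposal is missing.
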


\begin{proof}
    Let $X$ be any vertex-boundary pair and let $R\subseteq R_X$. First suppose that $w_X$ has a neighbour $y\notin R_X$. Let $E=\{e_1,\dots,e_k\}\subseteq \calE R_X$ be the boundary edges incident to $w_X$ such that $\{w_X,y\}\prec e_1\prec \cdots \prec e_k$.
    For $i\in [k]$, let $X_i$ be the edge-boundary pair consisting of
    %
    \begin{itemize}
        \setlength{\itemsep}{0pt}
        \item the region $R_{X_i}=R_X$,
        \item the distinguished boundary edge
                $e_{X_i}= e_i$,
        \item the pair $(B_{X_i},\Bprime_{X_i})$ of partial
                9-colourings of $\calE R_{X_i}$
                such that
            \begin{itemize}
                \setlength{\itemsep}{0pt}
                \item $B_{X_i}(\{w,v\})= \Bprime_{X_i}(\{w,v\})=
                        \calB_X(w)$ for $\{w,v\}\in \calE
                        R_X\setminus E$, where $w\in \partial R_X$,
                \item $B_{X_i}(e_j)=
                        \Bprime_{X_i}(e_j)= \calBprime_X(w_X)$
                        for $j\in \{1,\dots,i-1\}$,
                \item $B_{X_i}(e_i)= \calB_X(w_X)$ and
                        $\Bprime_{X_i}(e_i)= \calBprime_X(w_X)$, and
                \item $B_{X_i}(e_j)=
                        \Bprime_{X_i}(e_j)= \calB_X(w_X)$
                        for $j\in \{i+1,\dots,k\}$.
            \end{itemize}
    \end{itemize}
    %
    Note that $\pi_{\calB_X}^9= \pi_{B_{X_1}}^9$ and $\pi_{\calBprime_X}^9= \pi_{\Bprime_{X_k}}^9$. Figure~\ref{fig:vertex-to-edges} illustrates an example of how the edge-boundary pairs $X_i$ are constructed.
    %
    \begin{figure}[t]
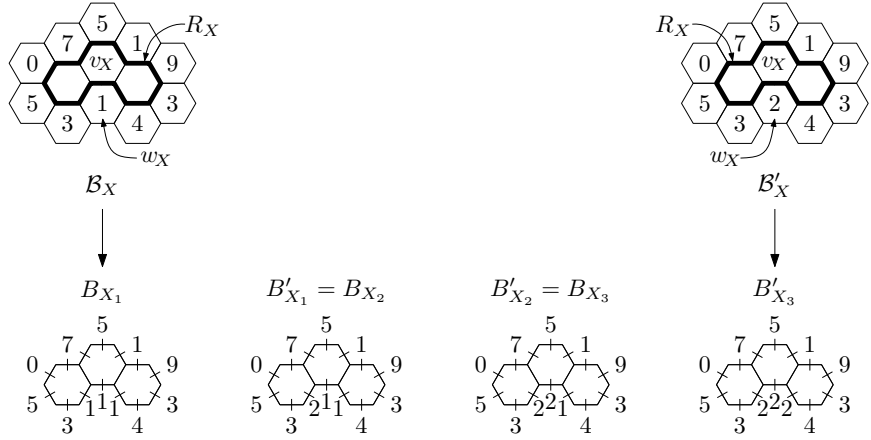

        \centering
        \showgraphics{vertex_to_edges}
        \caption{\label{fig:vertex-to-edges}A vertex-boundary pair $X$ broken into three edge-boundary pairs $X_1$, $X_2$ and $X_3$. The numbers represent the colours of the boundary vertices and boundary edges, respectively.}
    \end{figure}

    We use Lemma~\ref{lem:edge-decay} and for $i\in [k]$ we let $\Psi_i$ be a coupling of
    $\pi_{B_{X_i}}^9$ and $\pi_{\Bprime_{X_i}}^9$ such that
    %
    \begin{equation}
        \label{eq:exp-decay}
        \sum_{v\in R} \expect{1_{\Psi_i,v}} \leq
            \frac{5}{\epsilon} (1-\epsilon)^{d_{R_X}\!(w_X,R)}\,.
    \end{equation}

    We define a coupling $\Psi$ of $\pi_{\calB_X}^9$ and $\pi_{\calBprime_X}^9$ by composing the couplings $\Psi_1,\dots,\Psi_k$ as follows.
    Let $(\calC_0,\calC_1),(\calC_1,\calC_2),\dots,\nolinebreak[3](\calC_{k-1},\calC_k)$ be pairs of colourings drawn from $\Psi_1,\dots,\Psi_k$, respectively. Then $(\calC_0,\calC_k)$ is the pair of colourings drawn from $\Psi$.
    If, for $v\in R$, $\calC_0(v)\neq \calC_k(v)$, then $\calC_{j-1}(v)\neq \calC_j(v)$ for some $j\in [k]$.
    Hence
    %
    \begin{equation}
        \label{eq:prob-sum}
        \expect{1_{\Psi,v}} \leq
            \sum_{i=1}^k \expect{1_{\Psi_i,v}}\,.
    \end{equation}

    From Equations~(\ref{eq:exp-decay}) and~(\ref{eq:prob-sum}), and $k$ being at most~5, we have
    %
    \begin{equation*}
        \sum_{v\in R} \expect{1_{\Psi,v}} \leq
        \sum_{v\in R} \sum_{i=1}^k \expect{1_{\Psi_i,v}} =
        \sum_{i=1}^k \sum_{v\in R} \expect{1_{\Psi_i,v}} \leq
        \frac{25}{\epsilon} (1-\epsilon)^{d_{R_X}\!(w_X,R)}\,.
    \end{equation*}

    Lastly, suppose that all neighbours of $w_X$ are in $R_X$.
    A technical detail arises here because we can no longer
    break the discrepancy at $w_X$ into edge-boundary pairs $X_i$ as above. Instead we will first randomly choose a colour of a neighbour $u$ of $w_X$ and then define edge-boundary pairs for the region $R_X\setminus \{u\}$.

    Let $u$ be any neighbour of $w_X$ and let $R_u= R_X\setminus \{u\}$.
    We define a coupling $\Psi$ of $\pi_{\calB_X}^9$ and $\pi_{\calBprime_X}^9$ as follows.
    Let $\calC$ and $\calC'$ be colourings drawn (independently) from $\pi_{\calB_X}^9$ and $\pi_{\calBprime_X}^9$, respectively.
    Let $\calB$ and $\calBprime$ be the two colourings of $\partial R_u$ such that $\calB(w)=\calB_X(w)$ and $\calBprime(w)=\calBprime_X(w)$ for $w\in \partial R_X\cap \partial R_u$, and $\calB(u)=\calC(u)$ and $\calBprime(u)=\calC'(u)$. Let $\Psi'$ be a coupling of $\pi^9_\calB$ and $\pi^9_\calBprime$ which we will define shortly. In a pair of colourings drawn from $\Psi$, the vertex $u$ is assigned the colours $\calC(u)$ and $\calC'(u)$, respectively, and the other vertices of $R_u$ are assigned colours according to $\Psi'$. It remains to define~$\Psi'$.

    Let $E=\{e_1,\dots,e_k\}\subseteq \calE R_u$ be the boundary edges incident to either $w_X$ or $u$ such that $e_1$ is incident to $w_X$, and $\{w_X,u\}\prec e_1\prec \cdots \prec e_5$ and $\{w_X,u\}\prec e_6\prec \cdots \prec e_k$.
    Note that five neighbours of $w_X$ are in $R_u$ and both $e_6$ and $e_k$ are edges between $u$ and neighbours of $w_X$. An example is illustrated in Figure~\ref{fig:vertex-u}.
    %
    \begin{figure}[t]
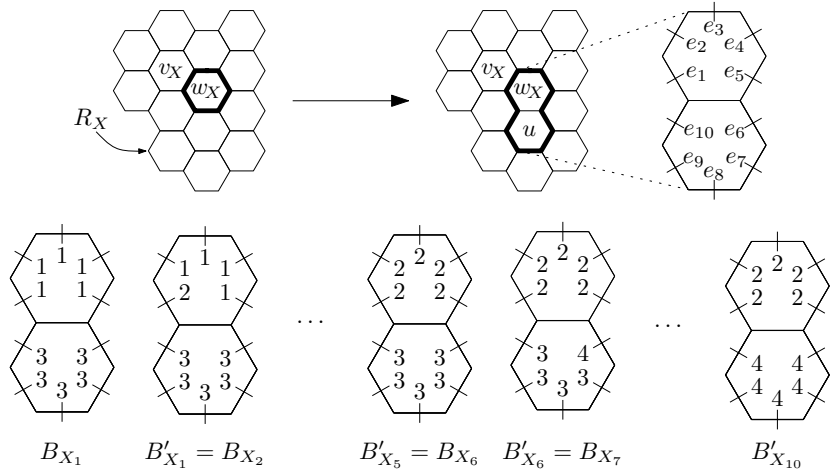

        \centering
        \showgraphics{vertex-u}
        \caption{\label{fig:vertex-u}An example of edge boundary colourings of the edge-boundary pairs $X_1,\dots,X_{10}$ constructed from a vertex-boundary pair $X$ for which all six neighbours of $w_X$ are in $R_X$. Here we suppose that $\mathcal{B}_X(w_X)=1$ and $\mathcal{B}'_X(w_X)=2$, and $\calC(u)=3$ and $\calC'(u)=4$.}
    \end{figure}

    Similarly to above, we define $k$ edge-boundary pairs.
    For $i\in [k]$, let $X_i$ be the edge-boundary pair consisting of the region $R_{X_i}=R_u$, the distinguished edge $e_{X_i}= e_i$ and boundary colourings that differ on $e_{X_i}$.
    Note that $\pi^9_\calB = \pi^9_{B_{X_1}}$ and $\pi^9_\calBprime = \pi^9_{\Bprime_{X_k}}$.
    See Figure~\ref{fig:vertex-u} for an example.

    Let $\Psi_i$ be a coupling of
    $\pi_{B_{X_i}}^9$ and $\pi_{\Bprime_{X_i}}^9$ such that
    %
    \begin{equation*}
        \label{eq:exp-decay-worse}
        \sum_{v\in R} \expect{1_{\Psi_i,v}} \leq
            \frac{5}{\epsilon} (1-\epsilon)^{d_{R_X}\!(w_X,R)-1}\,.
    \end{equation*}
    %
    The $-1$ in the exponent comes from the fact that for some $X_i$, the edge $e_{X_i}$ is incident to $u$ and $d_{R_X}\!(u,R)$ might be $d_{R_X}\!(w_X,R)-1$ for some $R$.

    The coupling $\Psi'$ is defined by composing the couplings $\Psi_1,\dots,\Psi_k$. Thus, with $k$ being at most 10,
    %
    \begin{equation*}
        \sum_{v\in R} \expect{1_{\Psi',v}} \leq
        \sum_{v\in R} \sum_{i=1}^k \expect{1_{\Psi_i,v}} =
        \sum_{i=1}^k \sum_{v\in R} \expect{1_{\Psi_i,v}} \leq
        \frac{50}{\epsilon} (1-\epsilon)^{d_{R_X}\!(w_X,R)-1}\,.
        \qedhere
    \end{equation*}
\end{proof}

The main theorem of the paper can now be proved.

\setcounter{tempcounter}{\value{theorem}}
\setcounter{theorem}{\value{ssmcounter}}
\begin{theorem}
    \ssmthm
\end{theorem}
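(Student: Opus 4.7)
The plan is to deduce the theorem directly from Lemma~\ref{lem:vertex-decay} via the standard coupling characterisation of total variation distance. Given any instance of the strong-spatial-mixing setup, namely a region $R$, subregion $R'\subseteq R$, distinguished boundary vertex $w\in \partial R$, and partial 9-colourings $\calB,\calB'$ of $\partial R$ agreeing everywhere except at $w$ with $\calB(w),\calB'(w)>0$, I observe that these are exactly the data of a vertex-boundary pair $X$ with $R_X=R$, $w_X=w$, $\calB_X=\calB$ and $\calBprime_X=\calB'$. Thus Lemma~\ref{lem:vertex-decay} applies and yields a coupling $\Psi$ of $\pi^9_\calB$ and $\pi^9_{\calB'}$ whose per-vertex disagreement probabilities sum to at most $\frac{50}{\epsilon(1-\epsilon)}(1-\epsilon)^{d_R(w,R')}$ over $v\in R'$.

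Next I would use the coupling bound on total variation distance. For the marginals on $R'$ induced by $\Psi$, writing $(\calC,\calC')$ for a sample,
\begin{equation*}
    \dtv\bigl(\pi^9_\calB(R'),\pi^9_{\calB'}(R')\bigr)
        \leq \prob_\Psi\!\bigl[\calC|_{R'}\neq \calC'|_{R'}\bigr]
        \leq \sum_{v\in R'}\prob_\Psi[\calC(v)\neq \calC'(v)]
        = \sum_{v\in R'}\expect{1_{\Psi,v}},
\end{equation*}
where the first inequality is the standard coupling lemma and the second is a union bound. Combining with Lemma~\ref{lem:vertex-decay} gives
\begin{equation*}
    \dtv\bigl(\pi^9_\calB(R'),\pi^9_{\calB'}(R')\bigr)
        \leq \frac{50}{\epsilon(1-\epsilon)}(1-\epsilon)^{d_R(w,R')}.
\end{equation*}

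Finally, to match the form of the strong-spatial-mixing inequality~(\ref{eq:ssm}), I would convert the geometric decay to exponential decay using $1-\epsilon\leq e^{-\epsilon}$, so $(1-\epsilon)^{d_R(w,R')}\leq e^{-\epsilon\, d_R(w,R')}$. Since $|R'|\geq 1$, the bound above is trivially at most $\beta|R'|e^{-\beta' d_R(w,R')}$ with $\beta=\frac{50}{\epsilon(1-\epsilon)}$ and $\beta'=\epsilon=1/1000$, both strictly positive constants independent of $R$, $R'$, $w$, $\calB$ and $\calB'$. This verifies the definition of strong spatial mixing for $q=9$ on the triangular lattice.

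There is no real obstacle at this stage; all the difficulty has been absorbed into Lemma~\ref{lem:vertex-decay} and the machinery (recursive coupling, the computer-verified Lemmas~\ref{lem:mu-values} and~\ref{lem:alphas}) that feeds into it. The only minor thing to check is that the conditions for a vertex-boundary pair match the hypotheses of strong spatial mixing, which is immediate from the definitions.
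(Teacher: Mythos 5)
Your proposal is correct and follows essentially the same route as the paper: cast $(R,R',w,\calB,\calB')$ as a vertex-boundary pair, apply Lemma~\ref{lem:vertex-decay} to get a coupling with exponentially small total discrepancy on $R'$, and then invoke the coupling bound on total variation distance. The only cosmetic difference is that the paper takes $\beta'=-\ln(1-\epsilon)$ so that $(1-\epsilon)^d = e^{-\beta' d}$ exactly, whereas you take $\beta'=\epsilon$ and use $1-\epsilon\leq e^{-\epsilon}$; both choices are valid and yield the required form.
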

\setcounter{theorem}{\value{tempcounter}}

\begin{proof}
    Let $R$ by any region, $R'\subseteq R$, $w\in \partial R$, and $\calB$ and $\calBprime$ two partial 9-colourings of $\partial R$ that are identical on all vertices except for $w$.
    Suppose that $\Psi$ is a coupling of $\pi_\calB^9$ and $\pi_\calBprime^9$ and let $(\calC, \calC')$ be a pair of colourings drawn from $\Psi$. Then
    %
    \begin{equation}
        \label{eq:ssm-bound}
        \dtv(\pi_\calB^9(R'),\pi_\calBprime^9(R'))\leq
        \prob(\text{$\calC\neq \calC'$ on $R'$})\leq
        \sum_{v\in R'} \expect{1_{\Psi,v}}\,.
    \end{equation}

    Let $X$ be the vertex-boundary pair such that $R_X=R$, $w_X=w$, $\calB_X=\calB$ and $\calBprime_X=\calBprime$.
    Let $\epsilon=1/1000$, $\beta=\frac{50}{\epsilon(1-\epsilon)}$ and $\beta'=-\ln(1-\epsilon)$.
    Note that a coupling of $\pi_{\calB_X}^9$ and $\pi_{\calBprime_X}^9$ is also a coupling of $\pi_\calB^9$ and $\pi_\calBprime^9$. Using Lemma~\ref{lem:vertex-decay}, we know there is a coupling $\Psi'$ of $\pi_{\calB_X}^9$ and $\pi_{\calBprime_X}^9$ such that
    %
    \begin{equation}
        \label{eq:ssm-exp}
        \sum_{v\in R'} \expect{1_{\Psi',v}} \leq
            \beta e^{-\beta' d_{R_X}(w_x,R')}\,.
    \end{equation}
    %
    Let $\Psi=\Psi'$. Strong spatial mixing follows from Equations~(\ref{eq:ssm-bound}) and~(\ref{eq:ssm-exp}).
\end{proof}



%% ================================================================== %%
%% ================================================================== %%

\section{Rapidly mixing Glauber dynamics}
\label{sec:glauber}

We use Theorem~8 of Goldberg, Martin and Paterson~\cite{gmp-ssmfc-04} to show that the Glauber dynamics is rapidly mixing. Before applying their theorem we must introduce some notation.

Let $\ball_d(v)$ denote the set of vertices in $V_\trilat$ that are at distance at most $d$ from the vertex $v\in V_\trilat$. Thus we have $\ball_0(v)=\{v\}$. From the definition of the triangular lattice it follows that $|\partial \ball_d(v)|$, the number of vertices at distance $d+1$ from $v$, is of order $\Theta(d)$, hence $|\ball_d(v)|\in \Theta(d^2)$. One can show that $|\partial\ball_d(v)|=6(d+1)$ but we do not need to be that precise here (see Lemma~2.22 in~\cite{jalsenius-thesis-08} for a proof). It follows that for all~$v$, $|\partial\ball_d(v)|/|\ball_d(v)|\rightarrow 0$ as $d\rightarrow \infty$. In other words, uniformly in $v$, the ``surface-area-to-volume'' ratio of balls can be made arbitrarily small with a suitable choice of radius $d$.
This property of a graph is known as \emph{neighbourhood-amenability}.%, the triangular lattice is what is called \emph{neighbourhood-amenable}.

% A general definition is given in Definition~2 in~\cite{gmp-ssmfc-04}.

Goldberg, Martin and Paterson~\cite{gmp-ssmfc-04} introduced the following definition of an \emph{$\epsilon$-coupling cover}.

\begin{definition}[Definition~4 of~\cite{gmp-ssmfc-04} quoted exactly]
    Let $G$ denote an infinite graph with maximum degree $\Delta$. Fix $\epsilon>0$. We say that $G$ has an \emph{$\epsilon$-coupling cover} if for all vertex-boundary pairs $X$, there is a coupling $\Psi_X$ of $\pi_{\calB_X^1}$ and $\pi_{\calB_X^2}$ such that
    %
    \begin{equation*}
        \sum_{f\in R_X} \expect{1_{\Psi_X, f}} \leq \frac{\Delta}{\epsilon}\,.
    \end{equation*}
\end{definition}

We apply the definition to the triangular lattice. From Lemma~\ref{lem:vertex-decay} it follows that for every vertex-boundary pair $X$ there is a coupling $\Psi$ of $\pi^9_{\calB_X}$ and $\pi^9_{\calBprime_X}$ such that
%
\begin{equation}
    \sum_{v\in R_X} \expect{1_{\Psi,v}} \leq \frac{50}{1/1000\cdot(1-1/1000)}=\frac{\Delta}{\epsilon}\,,
\end{equation}
%
where $\Delta=6$ is the maximum degree of the triangular lattice and $\epsilon$ is the appropriate constant.
Thus, the triangular lattice has an $\epsilon$-coupling cover.

We have the following theorem from~\cite{gmp-ssmfc-04}.

\begin{theorem}[Theorem~8 of~\cite{gmp-ssmfc-04} quoted exactly]
    Let $G$ denote an infinite neighbourhood-amenable graph with maximum degree $\Delta$. Let $R$ be a finite subgraph of $G$ with $|R|=n$ and $\calB(R)$ denote a colouring of $\partial(R)$ using the colours $Q\cup \{0\}$. (We assume that $q\geq \Delta+2$.)

    Suppose there exists $\epsilon>0$ such that $G$ has an $\epsilon$-coupling cover. Then the Glauber dynamics Markov chain on $S(\calB(R))$ is rapidly mixing and $\tau(\delta)\in O(n(n+\log\frac{a}{\delta}))$.
\end{theorem}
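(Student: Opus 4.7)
The plan is to go via block dynamics: define a Markov chain $M_d$ that resamples an entire ball $B_v=\ball_d(v)\cap R$ at each step, show $M_d$ mixes rapidly by combining the $\epsilon$-coupling cover with neighbourhood-amenability through path coupling, and then transfer the mixing bound to the single-vertex Glauber dynamics by a standard comparison of Markov chains.

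First, I would use neighbourhood-amenability to fix a constant radius $d=d(\epsilon,\Delta)$ such that $|\partial\ball_d(u)|/|\ball_d(u)|<\epsilon/\Delta$ for every $u\in V_G$, with some margin to spare. Define $M_d$ on $\Omega^q_\calB$ by picking $v\in R$ uniformly and replacing the current colouring on $B_v:=\ball_d(v)\cap R$ by a draw from its conditional distribution given the colouring on $R\setminus B_v$ together with $\calB$ on $\partial R$. The stationary distribution of $M_d$ is still $\pi_\calB^q$.

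Next I would apply path coupling to $M_d$. Given two configurations $\calC,\calC'\in\Omega^q_\calB$ differing at a single $u\in R$, there are three cases for the chosen block $B_v$: (i)~if $u\in B_v$, the two conditional distributions on $B_v$ agree (both chains induce the same boundary), so perfect coupling reduces the Hamming distance from $1$ to $0$; (ii)~if $u\in \partial B_v$, the two conditional distributions on $B_v$ form a vertex-boundary pair with region $B_v$ and distinguished boundary vertex $u$, so the $\epsilon$-coupling cover supplies a coupling whose expected in-block disagreement is at most $\Delta/\epsilon$, giving new Hamming distance at most $1+\Delta/\epsilon$; (iii)~otherwise the relevant boundary of $B_v$ is unchanged and the distance stays at~$1$. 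Averaging over $v$, the expected change in Hamming distance is at most
\[
\frac{1}{n}\Bigl(-|\ball_d(u)\cap R|+\frac{\Delta}{\epsilon}\,|\partial\ball_d(u)\cap R|\Bigr)\leq -\frac{c}{n}
\]
for some $c>0$ given by the amenability-based choice of $d$. Standard path coupling then yields a block mixing time of $O(n\log(n/\delta))$.

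Finally, I would transfer this bound to the single-vertex Glauber dynamics using the standard comparison of Markov chains (in the style of Diaconis--Saloff-Coste, or Martinelli--Olivieri): because $|B_v|$ is a constant, one block resampling can be dominated (in Dirichlet-form sense) by $O(1)$ rounds of single-vertex heat-bath updates on $B_v$, losing only a polynomial factor in the mixing time and producing the claimed bound $\tau(\delta)\in O(n(n+\log(1/\delta)))$. The main obstacle lies in the path-coupling bookkeeping in case (ii): one must check that the $\epsilon$-coupling cover really applies uniformly to every clipped region $B_v$, whose boundary inherits real colours from $R\setminus B_v$ alongside the partial colouring $\calB$ on $\partial R\cap \partial B_v$, and that the amenable surface-to-volume estimate still dominates the $\Delta/\epsilon$ factor even for vertices $u$ close to $\partial R$ where $B_v$ is truncated. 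Once these uniformity issues are handled the final comparison step is essentially off-the-shelf.
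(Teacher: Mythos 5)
The statement you are asked to prove is \emph{not} proved in this paper: it is Theorem~8 of Goldberg, Martin and Paterson, quoted verbatim and used as a black box. So there is no ``paper's own proof'' to align with; the assessment below is of your proposal on its own terms, read against what such a proof would have to establish.

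Your overall architecture (heat-bath block dynamics on balls, path coupling driven by the $\epsilon$-coupling cover plus amenability, then comparison down to single-site Glauber) is the right family of ideas for this theorem, and cases (i) and (iii) of your case analysis are correct. The genuine gap is exactly the one you flag at the end but do not resolve, and it is not a loose end that ``the final comparison step'' sweeps up: the path-coupling drift with the plain Hamming metric and blocks $B_v=\ball_d(v)\cap R$ is \emph{not} uniformly negative over $u\in R$. Your display bounds the expected change by
\begin{equation*}
  \frac{1}{n}\Bigl(-|\ball_d(u)\cap R|+\tfrac{\Delta}{\epsilon}\,|\partial\ball_d(u)\cap R|\Bigr),
\end{equation*}
and then claims this is $\leq -c/n$ ``by the amenability-based choice of $d$.'' But amenability controls $|\partial\ball_d(u)|/|\ball_d(u)|$ in the infinite graph, not the ratio after intersecting with an arbitrary finite region $R$. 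Those two ratios can differ arbitrarily. Concretely, take $R$ to consist of $u$, its neighbours, and the full sphere $\partial\ball_d(u)$ (a perfectly legal ``region'' in this paper's sense, which requires only a finite vertex set). Then $|\ball_d(u)\cap R|$ is a constant ($1+\deg u$), while every $v\in\partial\ball_d(u)$ has $u\in\partial B_v$ (since the last vertex on a geodesic from $v$ to $u$ lies in $N(u)\cap\ball_d(v)\subseteq B_v$), so the increase count is $\Theta(d)$ and the drift is $+\Theta(d\,\Delta/\epsilon)/n$, not negative. Increasing $d$ only makes this worse. So case (ii) does not close as written, and the theorem cannot be reached this way without an additional idea.

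What is missing is some device that damps the contribution of disagreements created near $\partial R$ or near the ``thin'' parts of $R$ --- for example a weighted Hamming metric with weights decaying towards $\partial R$, a recursion on subregions (Martinelli--Olivieri style), or a spectral-gap comparison that does not go through a per-step contraction estimate. You anticipate the problem in your final paragraph (``the amenable surface-to-volume estimate still dominates \dots even for vertices $u$ close to $\partial R$ where $B_v$ is truncated''), but this is precisely the load-bearing step, and ``once these uniformity issues are handled'' is doing the work of the proof rather than summarising it. Until that device is supplied, the argument does not establish rapid mixing for general $R$.
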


Since the triangular lattice is neighbourhood-amenable and has an $\epsilon$-coupling cover, Theorem~8 of~\cite{gmp-ssmfc-04} translates directly to our Theorem~\ref{thm:rapid-mixing}, which is repeated below.

\setcounter{tempcounter}{\value{theorem}}
\setcounter{theorem}{\value{glaubercounter}}
\begin{theorem}
    \glauberthm
\end{theorem}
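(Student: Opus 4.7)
The plan is to reduce the theorem to Theorem~8 of Goldberg, Martin and Paterson~\cite{gmp-ssmfc-04}, which converts an $\epsilon$-coupling cover on an infinite neighbourhood-amenable graph of maximum degree $\Delta$ (with $q\geq \Delta+2$) into exactly the mixing-time bound $\tau(\delta)\in O(n(n+\log(1/\delta)))$ that we want. So the proof is purely a verification that the triangular lattice with $q=9$ colours satisfies the hypotheses of that theorem, and since $\Delta=6$ the colour-count constraint $q\geq\Delta+2$ is trivially met.

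First I would establish neighbourhood-amenability. From the combinatorics of $\calT$ one has $|\partial\ball_d(v)|=6(d+1)\in\Theta(d)$ and $|\ball_d(v)|\in\Theta(d^2)$, so the surface-to-volume ratio tends to $0$ as $d\to\infty$, uniformly in $v$. For a rigorous derivation of $|\partial\ball_d(v)|=6(d+1)$ one can cite Lemma~2.22 of~\cite{jalsenius-thesis-08}, but for our purposes only the order-of-magnitude statement is needed.

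Next I would construct the $\epsilon$-coupling cover by applying Lemma~\ref{lem:vertex-decay} with $R=R_X$. For any vertex-boundary pair $X$ one has $d_{R_X}(w_X,R_X)\geq 1$, and so the lemma supplies a coupling $\Psi$ of $\pi^9_{\calB_X}$ and $\pi^9_{\calBprime_X}$ with
\begin{equation*}
\sum_{v\in R_X}\expect{1_{\Psi,v}}\;\leq\;\frac{50}{\epsilon_0(1-\epsilon_0)}\,(1-\epsilon_0)^{1}\;=\;\frac{50}{\epsilon_0},
\end{equation*}
where $\epsilon_0=1/1000$. Setting $\epsilon=6\epsilon_0/50=3/25000$ recasts the right-hand side as $\Delta/\epsilon$ with $\Delta=6$, which is exactly the inequality in the definition of an $\epsilon$-coupling cover. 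Hence $\calT$ admits such a cover, and invoking Theorem~8 of~\cite{gmp-ssmfc-04} concludes the proof.

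There is no genuinely hard step here: all the difficulty has been absorbed into Lemma~\ref{lem:vertex-decay} (and, recursively, into the computer-assisted bounds of Lemmas~\ref{lem:mu-values} and~\ref{lem:alphas}). The only bookkeeping hazard is ensuring that the constant $\epsilon_0$ governing the exponential decay of the couplings is translated correctly into a single constant $\epsilon$ matching the $\Delta/\epsilon$ form required by the coupling-cover definition; once that algebraic adjustment is made, both hypotheses of Theorem~8 of~\cite{gmp-ssmfc-04} hold and the conclusion $\tau_\calB^9(\delta)\in O(n^2+n\log(1/\delta))$ is immediate.
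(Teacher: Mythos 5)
Your proof follows the same route as the paper: citing Theorem~8 of Goldberg, Martin and Paterson, verifying that the triangular lattice is neighbourhood-amenable and that $q=9\geq\Delta+2=8$, and obtaining the $\epsilon$-coupling cover by instantiating Lemma~\ref{lem:vertex-decay} with $R=R_X$. The only (immaterial) difference is that you use $d_{R_X}(w_X,R_X)=1$ to absorb one factor of $(1-\epsilon_0)$, giving the slightly sharper constant $50/\epsilon_0$ in place of the paper's $50/(\epsilon_0(1-\epsilon_0))$; both yield a valid $\epsilon$ for the coupling-cover definition.
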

\setcounter{theorem}{\value{tempcounter}}


%% ================================================================== %%
%% ================================================================== %%

\section{Computations}
\label{sec:computations}

In this section we go through the computational steps involved in obtaining the 39 $\mu$-values from Table~\ref{tab:mu-values} and proving Lemma~\ref{lem:alphas}. The computations of the $\mu$-values are rather demanding and took around two weeks to run on a fairly powerful home PC of year 2006. We have used the language C for this task. For proving Lemma~\ref{lem:alphas}, we have used the language Python. Details on the implementation is given in Appendix~\ref{app:implementation}. The source code is available at \url{http://arxiv.org/abs/0706.0489}\,.


%% ================================================================== %%

\subsection{Computing the $\mu$-values from Table~\ref{tab:mu-values}}
\label{sec:computing-mus}

Suppose that $X$ is an edge-boundary pair.
Let $\Omega$ be the set of colourings $\calC\in \Omega^9_{\Bprime_X}$ such that $\calC(v_X)= B_X(e_X)$, and let $\Omega'$ be the set of colourings $\calC\in \Omega^9_{B_X}$ such that $\calC(v_X)= \Bprime_X(e_X)$.
Let $\Omega_\text{both} = \Omega^9_{B_X}\cap \Omega^9_{\Bprime_X}$ be the set of proper 9-colourings of $R_X$ that agree with both $B_X$ and $\Bprime_X$.
\begin{lemma}
\label{lem:mu-program}
    For any edge-boundary pair X,
    %
    \begin{equation*}
        \mu(X)= \frac{\max\{|\Omega|,|\Omega'|\}}{
            |\Omega_\textup{both}| + \max\{|\Omega|,|\Omega'|\}}\,.
    \end{equation*}
    %
\end{lemma}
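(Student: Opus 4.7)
The plan is to reduce the question from couplings on full colourings to couplings on the single-vertex marginal at $v_X$, and then compute that minimum explicitly as a total variation distance.

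First I would partition the two sample spaces by the colour at $v_X$. Write $a = B_X(e_X)$ and $b = \Bprime_X(e_X)$ (both nonzero, $a \neq b$). Because $B_X$ and $\Bprime_X$ agree on every edge of $\calE R_X \setminus \{e_X\}$, a proper 9-colouring $\calC$ of $R_X$ that agrees with one of them on all edges except possibly $e_X$ agrees with both iff $\calC(v_X)$ avoids both $a$ and $b$. This gives the disjoint decompositions
\begin{equation*}
    \Omega^9_{B_X} = \Omega_{\text{both}} \sqcup \Omega', \qquad
    \Omega^9_{\Bprime_X} = \Omega_{\text{both}} \sqcup \Omega,
\end{equation*}
and moreover, for any colour $c \notin \{a,b\}$, the sets $\{\calC \in \Omega^9_{B_X} : \calC(v_X)=c\}$ and $\{\calC \in \Omega^9_{\Bprime_X} : \calC(v_X)=c\}$ coincide; call their common size $N_c$. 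Thus $|\Omega_{\text{both}}| = \sum_{c \neq a,b} N_c$.

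Next I would replace the question about couplings of the full distributions by a question about couplings of the $v_X$-marginals. For any coupling $\Psi$ of $\pi^9_{B_X}$ and $\pi^9_{\Bprime_X}$, the induced joint law on $(\calC(v_X),\calC'(v_X))$ is a coupling of the corresponding marginal distributions $p$ and $q$ on $[9]$. Conversely, any coupling of $p$ and $q$ extends to a coupling of the full distributions by drawing, conditionally on the pair of colours at $v_X$, independently from $\pi^9_{B_X}(\cdot \mid v_X = c)$ and $\pi^9_{\Bprime_X}(\cdot \mid v_X = c')$. Hence
\begin{equation*}
    \mu(X) \;=\; \min_{\Psi}\prob(\calC(v_X)\neq\calC'(v_X))
           \;=\; \dtv(p,q) \;=\; 1-\sum_{c\in[9]} \min(p(c),q(c)),
\end{equation*}
with the last equality attained by the maximal coupling.

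Finally I would evaluate this explicitly. Let $Z_1 = |\Omega^9_{B_X}| = |\Omega_{\text{both}}|+|\Omega'|$ and $Z_2 = |\Omega^9_{\Bprime_X}| = |\Omega_{\text{both}}|+|\Omega|$. Since $p(a)=0$, $q(b)=0$, and $p(c)=N_c/Z_1$, $q(c)=N_c/Z_2$ for $c \neq a,b$,
\begin{equation*}
    \sum_{c}\min(p(c),q(c)) \;=\; \sum_{c\neq a,b} \frac{N_c}{\max(Z_1,Z_2)} \;=\; \frac{|\Omega_{\text{both}}|}{\max(Z_1,Z_2)}.
\end{equation*}
Noting that $\max(Z_1,Z_2) = |\Omega_{\text{both}}| + \max(|\Omega|,|\Omega'|)$, we conclude
\begin{equation*}
    \mu(X) \;=\; 1 - \frac{|\Omega_{\text{both}}|}{|\Omega_{\text{both}}|+\max(|\Omega|,|\Omega'|)} \;=\; \frac{\max(|\Omega|,|\Omega'|)}{|\Omega_{\text{both}}|+\max(|\Omega|,|\Omega'|)},
\end{equation*}
as claimed. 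The only subtle point is the reduction from full couplings to marginal couplings, which hinges on the extension-by-conditional-distributions argument above; everything else is bookkeeping on the partition of $\Omega^9_{B_X}$ and $\Omega^9_{\Bprime_X}$.
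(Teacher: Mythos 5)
Your proof is correct and follows essentially the same underlying idea as the paper: partition the two colouring spaces according to the colour at $v_X$ and construct a coupling that matches the colour at $v_X$ whenever possible. The paper does this directly, splitting into the two cases $|\Omega|\geq|\Omega'|$ and $|\Omega'|\geq|\Omega|$, exhibiting the favourable coupling, and simply asserting it minimises $\expect{1_{\Psi,v_X}}$. Your version is slightly more explicit and self-contained on two points the paper leaves implicit: (i) the reduction from couplings of the full distributions to couplings of the one-vertex marginals at $v_X$, justified by the extension-by-conditionals construction, and (ii) the lower bound, via the standard characterisation $\min_\Psi \prob(\calC(v_X)\neq\calC'(v_X)) = \dtv(p,q)$, which shows no coupling can do better. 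You also use the observation that the restrictions of $\Omega^9_{B_X}$ and $\Omega^9_{\Bprime_X}$ to a fixed colour $c\notin\{a,b\}$ at $v_X$ coincide, which cleanly collapses the TV computation to the single formula with $\max(Z_1,Z_2)$ rather than requiring a case split. Both routes lead to the same calculation; yours is more self-contained and avoids the by-hand case analysis, at the modest cost of invoking the TV-distance optimality fact.
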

\begin{proof}
    %??Note that $\mu(X)= \expect{1_{\Psi_X,v_X}}$.
    %??We have $\expect{1_{\Psi_X,v_X}}= \mu(X)$.
    Recall from the definition of $\mu(X)$ that $\mu(X)= \expect{1_{\Psi_X,v_X}}$.

    Suppose first that $|\Omega|\geq |\Omega'|$. In order to minimise $\expect{1_{\Psi,v_X}}$ we construct a coupling $\Psi$ of $\pi_{B_{X}}$ and $\pi_{\Bprime_{X}}$ as follows. Let $(\calC,\calC')$ be a pair of colourings drawn from $\Psi$. If $\calC'(v_X)= B_X(e_X)$ then $\calC(v_X)\neq \calC'(v_X)$ because $\calC$ is drawn from $\pi_{B_X}$, preventing $v_X$ from receiving the colour $B_X(e_X)$.
    However, if $\calC'(v_X)\neq B_X(e_X)$ then we choose  $\calC(v_X)= \calC'(v_X)$, which is always possible under the assumption that $|\Omega|\geq |\Omega'|$. Hence $\expect{1_{\Psi_X,v_X}}= |\Omega|/(|\Omega_\text{both}|+ |\Omega|)$.

    Suppose second that $|\Omega'|\geq |\Omega|$. By symmetry we have that $\expect{1_{\Psi_X,v_X}}=\nolinebreak[3] |\Omega'|/(|\Omega_\text{both}|+ |\Omega'|)$.
\end{proof}

For each of the 39 regions $M_i$ in Figure~\ref{fig:M-regions} we have written a program in C which computes $\mu_i$. We use the region $M_1$ to illustrate how $\mu_1$ is obtained. The other $\mu$-values are computed similarly.

Let $X$ be an edge-boundary pair such that $R_X=M_1$, $v_X=v_M$, $w_X=w_M$, $B_X$ and $\Bprime_X$ assign the colours $c_1,\dots,c_{18}$ to the boundary edges in $\calE R_X\setminus \{e_X\}$, $B_X(e_X)=1$ and $\Bprime_X(e_X)=2$. See Figure~\ref{fig:compute-m1}.
%
\begin{figure}[t]
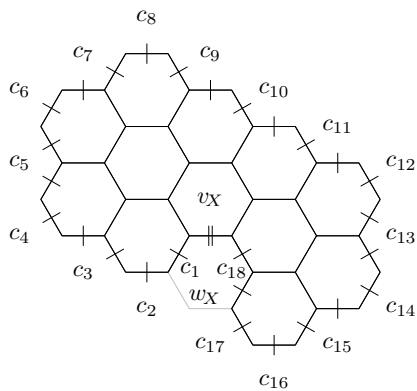

    \centering
    \showgraphics{m01computation}
    \caption{
        \label{fig:compute-m1}
        An edge-boundary pair $X$ with region $R_X=M_{1}$ and boundary edges coloured $c_1,\dots,c_{18}$.}
\end{figure}

For $i\in [9]$, let $n_i$ be the number of proper 9-colourings of $R_X$ that agree with the colouring $c_1,\dots,c_{18}$ of the boundary, \emph{disregarding} the colour of the edge $e_X$, and assign the colour $i$ to $v_X$. Thus, $n_1=|\Omega|$, $n_2=|\Omega'|$ and $n_3+\cdots +n_9= |\Omega_\text{both}|$.
We write a subroutine that computes the values $n_i$ given the colours $c_1,\dots,c_{18}$. Computing them in a brute force manner will take too long so we must be a little more clever than that. We construct a dynamic programming table which lets us reuse the number of colourings computed for subsets of the region $R_X$. For details, see Appendix~\ref{app:implementation}.

Let $m=n_1/(n_1+n_3+\cdots + n_9)$ be a function of $c_1,\dots,c_{18}$. We loop through the colours $c_1,\dots,c_{18}$, and for each configuration we compute $m$. It follows that $\mu_1$ is the largest value of $m$ that we encounter.

Each colour $c_j$ can take a value from the set $\{0,1,\dots,9\}$. Looping through all $10^{18}$ configurations of $c_1,\dots,c_{18}$ yields an unnecessary large number of redundant boundary colourings. Instead of considering all $10^{18}$ configurations, we keep the number down by making a few useful observations:

\begin{itemize}
\item Swapping the colours $c_{16}$ and $c_{17}$ will not change the value $m$. Hence we may skip colourings for which $c_{16}>c_{17}$.

\item The colours 0, 1 and~2 have a special meaning here since 0 symbolises ``no colour'' and 1 and~2 are used on the edge $e_X$. However, the colours $3,\dots,9$ are merely labels and therefore there is no reason to use a colour $c'\in \{4,\dots,9\}$ for $c_j$ unless the colour $c'-1$ has been used for some $c_{j'}$, where $j'<j$.

\item From the definition of an edge-boundary pair, $c_1, c_{18}\in \{1,2\}$ and $c_1\neq c_{18}$.
\end{itemize}

Obtaining the 39 $\mu$-values, using the observations described above, took around two weeks on a fairly powerful home PC of year 2006. We left a computer running non-stop for 24 hours per day without using it for other purposes.


\subsection{The experimental phase}
\label{sec:experimental}

A very reasonable question to ask is why we used exactly those 39 regions $M_i$ in Figure~\ref{fig:M-regions} and the regions $G$ and $F$. The regions are the result of a rather long experimental phase where we started with a set of smaller regions and carried out the computations as described in this article. Initially, we failed to prove Lemma~\ref{lem:alphas}. That is, we were not able to find constants $\alpha_i$ such that the equation in Lemma~\ref{lem:alphas} would hold for all regions in $\calG$. The reason for this is twofold: too small sizes of $G$ and $F$ do not allow enough recursions, and too small regions $M_i$ yield too large values $\mu_i$. Gradually we increased the sizes of the regions until Lemma~\ref{lem:alphas} could successfully be proved. In order to check whether the equations in the statement of Lemma~\ref{lem:alphas} could all be satisfied, we used the free linear program solver GLPK (GNU Linear Programming Kit). The constants found by the solver are the constants we use in the implementation of the proof of Lemma~\ref{lem:alphas}, which is described in detail in Appendix~\ref{app:implementation}. It should also be mentioned that the choice of $\epsilon$ played a role. For instance, we would have failed solving the linear program if we had used $\epsilon=1/100$ instead of the smaller $\epsilon=1/1000$.

While increasing the sizes of $G$ and $F$, we also computed $\mu$-values for growing regions $M_i$. In total we considered a few hundred distinct regions $M_i$, of which some were even larger than the region $M_1$ in Figure~\ref{fig:M-regions}. One might ask how we managed to compute the $\mu$-values for such a vast number of regions given that it took two weeks of computations for the 39 regions in Figure~\ref{fig:M-regions}. Instead of computing the $\mu$-values exactly, we used a hill climbing technique where we randomised colourings of the boundary, to which we iteratively made small changes, whereby larger and larger values $m$ (see previous section) were found. This process allowed us to build a ``library'' of randomised $\mu$-values. In practice, we let a computer run during the night over some time in order to obtain hopefully good estimates of the $\mu$-values. Interestingly, it turned out that for many regions, only a few minutes running time was enough to yield a value of $m$ that did not seem to increase further. For such regions we stopped the hill climbing process after a couple of hours.

Once Lemma~\ref{lem:alphas} had been successfully proved with randomised $\mu$-value estimates, we were faced with the task of computing the exact $\mu$-values. Initially the set of $M$-regions in the proof was rather large, so first we pruned the set by carefully choosing regions to throw away. Eventually we ended up with the 39 regions in Figure~\ref{fig:M-regions} for which the $\mu$-values were computed exactly. It is interesting to note that the 39 randomised $\mu$-value estimates were identical to the exact values.

The successful use of the $\mu$-value estimates suggests that our approach could be used to prove better mixing bounds for other lattices. Although the system of inequalities that was solved in the proof of Lemma~\ref{lem:alphas} contained a huge number of inequalities (around 100,000), the real bottleneck seemed to be the demanding computations of the $\mu$-values.


\section{Acknowledgements}

The author would like to thank Leslie Ann Goldberg for helpful discussions, and the University of Liverpool where most of the work on this paper has been conducted.



\bibliographystyle{plain}
\bibliography{triangular}

\newpage

%% ================================================================== %%
%% ================================================================== %%

\appendix
\section{Implementation}
\label{app:implementation}

In this appendix, we describe the implementation of the programs that have been used in the computations. In Section~\ref{sec:implement-verify} we explain the program that proves Lemma~\ref{lem:alphas}, and in Section~\ref{sec:implement-mu} we explain the programs that compute the 39 $\mu$-values and thereby prove Lemma~\ref{lem:mu-values}. All programs are available at
\begin{equation*}
    \text{\url{http://arxiv.org/abs/0706.0489}}
\end{equation*}


%% ================================================================== %%
\subsection{Proving Lemma~\ref{lem:alphas}}
\label{sec:implement-verify}

The program that proves Lemma~\ref{lem:alphas} is written in Python~2.6 and is called \texttt{lemma7.py}. There are numbered comments in the code, which are detailed below.

\subsubsection*{Comment 1}
The Fraction data type is imported to guarantee exact computations with rational numbers. \texttt{Fraction(x,~y)} represents the number $\frac{x}{y}$.

\subsubsection*{Comment 2}
The file \texttt{constants.py} is executed. It reads in the values of the constants $\alpha_1,\dots,\alpha_{2048}$. Here are the first few lines of the file \texttt{constants.py}.
%
\begin{verbatim}
alphaValues = {}
alphaValues["alpha10010010100"] = Fraction(2, 1)
alphaValues["alpha00000011001"] = Fraction(2, 1)
alphaValues["alpha01011111001"] = Fraction(20279, 10000)
alphaValues["alpha10101001111"] = Fraction(113631, 50000)
\end{verbatim}
%
\texttt{alphaValues} is a Dictionary data type in which each $\alpha$-variable, represented as a string, is associated with its value (a rational number). The variable name consists of a bit-string that encodes the $F$-region associated with the variable in the following way. Each vertex of the region $F$ is given a label from the set $\{0,\dots,10\}$ according to Figure~\ref{fig:codeF}. For $i\in\{0,\dots,10\}$, the $i$th bit (starting from the left) of the variable name is 1 if and only if the vertex labelled $i$ is in the region associated with the variable. In other words, the bit-string is a characteristic vector of the set of vertices of $F$.


\subsubsection*{Comment 3}
The program goes through all $\alpha$-variables, confirming that their values are in the range $[2,6]$.

\subsubsection*{Comment 4}
Defines $\epsilon=1/1000$.

\subsubsection*{Comment 5}
The 39 $\mu$-values are set. \texttt{mu[i]} holds the value $\mu_i$.

\subsubsection*{Comment 6}
The array \texttt{G} represents a subregion of $G$. Each vertex of the region $G$ is given a label from the set $\{0,\dots,16\}$ according to Figure~\ref{fig:codeG}. For $i\in\{0,\dots,16\}$, if \texttt{G[i]} is 1 then the vertex labelled $i$ is in the region represented by \texttt{G}. If \texttt{G[i]} is 0 then the vertex is not in the region.
%
\begin{figure}[t]
    \begin{minipage}[t]{0.47\linewidth}
        \centering
        \showgraphics{code-f}
        \caption{\label{fig:codeF}The region $F$ with labelled vertices.}
    \end{minipage}% %Do not remove the % after \end{minipage}
    \hfill
    \begin{minipage}[t]{0.47\linewidth}
        \centering
        \showgraphics{code-g}
        \caption{\label{fig:codeG}The region $G$ with labelled vertices.}
    \end{minipage}
\end{figure}

Starting with \texttt{G} containing all zeros, for each round in the while-loop, the array \texttt{G} is updated like a 17-digit binary counter. Thus, when \texttt{G[17]} is set to 1, all $2^{17}$ combinations of the first 17 bits have been considered, and the program breaks out from the while-loop (which ends the program). A configuration of the array \texttt{G} corresponds to a region $G'\in\calG$ in the statement of Lemma~\ref{lem:alphas}.

\subsubsection*{Comment 7}
If all five neighbours of $w_G$ (labels 9, 10, 13, 14, 15) are in \texttt{G}, proceed with the next region.

\subsubsection*{Comment 8}
Starting with an empty list \texttt{muList}, the program goes through all $i\in[39]$ and checks whether $M_i$ is a subregion of \texttt{G}. If so, $i$ is added to \texttt{muList}. The program also considers the mirrored version of $M_i$. Figure~\ref{fig:codeM-regions} illustrates the overlaps of the $M$-regions and $G$.

\subsubsection*{Comment 9}
The variable \texttt{minMu} is set to the $i$ such that $\mu_i$ is minimised over the $i$s in \texttt{muList}. Note that $\mu_{39}$ is the largest of all $\mu$-values.

\subsubsection*{Comment 10}
The regions \texttt{alphaA} through \texttt{alphaF} are subregions of $F$, defined according the overlaps of $F$ and $G$ in Figures~\ref{fig:codeIntersect-a}--\ref{fig:codeIntersect-f}, respectively. This corresponds to the overlaps in Figure~\ref{fig:intersections}. The strings \texttt{alphaAstring} through \texttt{alphaFstring} are the names of the $\alpha$-variables associated with each of the seven subregions of $F$.
%
\begin{figure}[t]
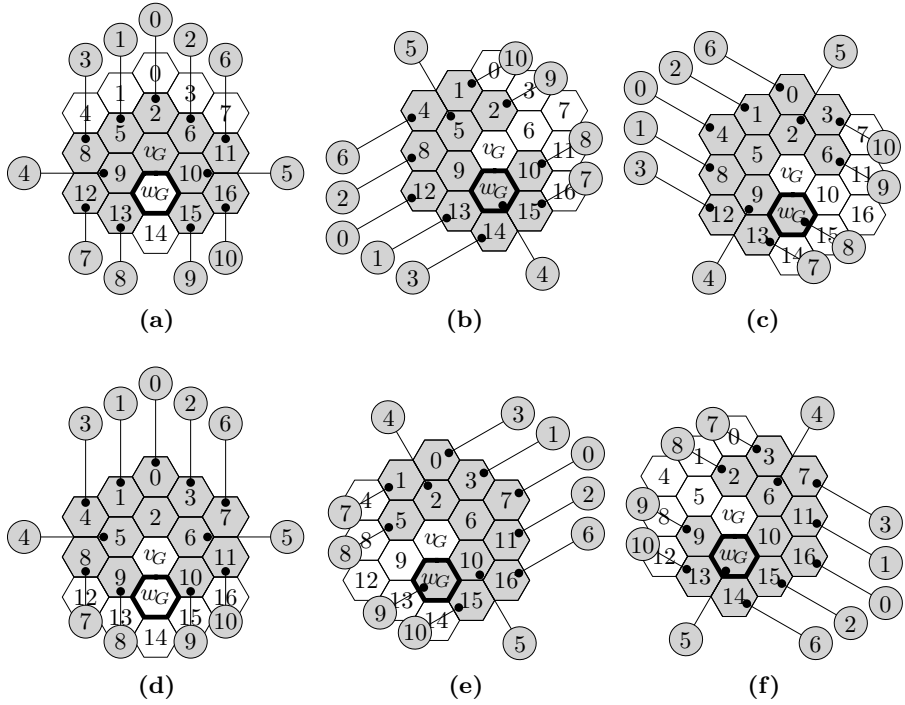

  \centering
  ~
  \hfill
  \subfloat[][]{\label{fig:codeIntersect-a}\showgraphics{code-intersections-a}}
  \hfill
  \subfloat[][]{\label{fig:codeIntersect-b}\showgraphics{code-intersections-b}}
  \hfill
  \subfloat[][]{\label{fig:codeIntersect-c}\showgraphics{code-intersections-c}}
  \hfill
  ~

  ~
  \hfill
  \subfloat[][]{\label{fig:codeIntersect-d}\showgraphics{code-intersections-d}}
  \hfill
  \subfloat[][]{\label{fig:codeIntersect-e}\showgraphics{code-intersections-e}}
  \hfill
  \subfloat[][]{\label{fig:codeIntersect-f}\showgraphics{code-intersections-f}}
  \hfill
  ~
  \caption{Intersections of region $F$ and $G$, using the labelling from Figures~\ref{fig:codeF} and~\ref{fig:codeG}.}
  \label{fig:codeIntersections}
\end{figure}
%

\subsubsection*{Comment 11}
The program now verifies Equation~(\ref{eq:inequality}) in the statement of Lemma~\ref{lem:alphas}. The variables \texttt{LHS} and \texttt{RHS} correspond to the left hand side and right hand side, respectively, of the equation. First, for each neighbour of $v_F$ that is in the region $G'$ (\texttt{G} in the code), add the corresponding $\alpha$-value to \texttt{LHS}, and then finally multiply the sum with $\mu_m$ (which is \texttt{mu[minMu]} in the code).
The value of \texttt{RHS} is the value of \texttt{alphaAstring} times $(1-\epsilon)$.

\subsubsection*{Summary}
When running the program \texttt{lemma7.py}, we note that no ``complaints'' are outputted, which implies that Lemma~\ref{lem:alphas} is successfully proved.


\subsection{Computing the $\mu$-values}
\label{sec:implement-mu}

The programs that compute the values $\mu_1,\dots,\mu_{39}$ are called \texttt{mu1.c}, \texttt{mu2.c}, and so on, up to \texttt{mu39.c}, respectively. They are written in C. Using Python for this task would be way too slow. All 39 programs include the file \texttt{mutop.c}, which contains code that runs at the very beginning of every \texttt{mu}-file. The file \texttt{mutop.c} contains macro definition of various for-loops and if-statements, and also defines a large set of variables. The purpose of this file is to keep the code of the \texttt{mu}-files short and concise, with focus on the actual region and its vertices without cluttered C-syntax. This is important when checking the correctness of the programs.

In order to describe the programs, we use the file \texttt{mu10.c} as an example and go through its code in detail. The structure of the other programs is the same.

We break the code into blocks and describe each block separately.

\subsubsection*{The file \texttt{mu10.c}}

\verbatimStart{1}
/*
mu.0.1.2.3.5.6.7.8
31648
123341
Output from the program:
63296
246682
13h34m (exact)
\end{Verbatim}

The lines above are comments in the code. We focus on the relevant lines. Recall that $\mu_{10}=31648/123341$. Line~3 is the numerator and Line~4 is the denominator. The output from the program is not the $\mu$-value in its shortest form. The output is specified by Lines~6 and~7 as the numerator and denominator, respectively. Line~8 gives the time it took to run the program in hours and minutes. Obviously this varies from machine to machine, but it gives a rough estimate. The word ``exact'' here is referring to the computations as being exact and not randomised (remains from the past).

The next few lines of the code are given in Figure~\ref{fig:codeMuExample}. They illustrate a labelling of the vertices of the region $M_{10}$ and a labelling of the boundary vertices. To make the description more comprehensible, we also show a drawing of the region on top of the code.
\begin{figure}[t]
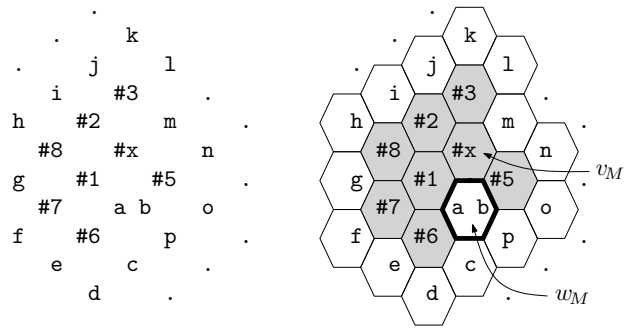

    \centering
    \showgraphics{code-mu-example}
    \caption{\label{fig:codeMuExample}The labelling of vertices in the code of \texttt{mu10.c}.}
\end{figure}

In the code, the numbering of the vertices of the region will be used in the variable names relating to them. The variables $\texttt{a},\texttt{b},\dots,\texttt{p}$ represent the boundary vertices. They define the colours of the boundary edges. For this reason, \texttt{a} and \texttt{b} are two variables of the same vertex $w_M$; \texttt{a} is the colour of the boundary edge incident to \texttt{\#1} and \texttt{\#6}, and \texttt{b} is the colour of the boundary edge incident to \texttt{\#5}.

\verbatimStart{9}
#include "mutop.c"
BEGIN;
AB C D_ E F G H I J K L_ M N O_ P_
{
\end{Verbatim}

Line~9 includes the file \texttt{mutop.c}, which defines macros and initiates the code with the \texttt{main()}-body, in which relevant variables are declared. We use upper-case letters for macros and constants, and lower-case letters for variables.

Line~10 records the start time of the execution. This line is irrelevant for the actual computation of $\mu$.

Every letter of Line~11 corresponds to a for-loop. That is, Line~11 defines a list of nested for-loop, each of which corresponds to the colour of a boundary edge. For example, \texttt{AB} loops through the two configurations $\texttt{a}=1,\texttt{b}=2$ and $\texttt{a}=2,\texttt{b}=1$, which are two choices of \texttt{a} and \texttt{b} we need to consider.

The macro \texttt{C} loops trough possible colours of \texttt{c}, which are 0, 1, 2 and 3. As described earlier, we do not need to consider other colours than those on \texttt{c}. The macro definitions take care of the bounds on possible values of the variables. For this particular macro \texttt{C}, the following lines from \texttt{mutop.c} are relevant.

\begin{verbatim}
#define FOR_START(c, uc)  for(c = QL, uc = 3; c <= uc; c++)
#define C                 FOR_START(c, uc)
\end{verbatim}

\texttt{QL} is 0 and is the smallest colour and \texttt{uc} is the upper bound on colours of \texttt{c}, which is always~3. As we will see below, the variables representing such upper bounds vary as we progress though the for-loops.

Next on Line~11 is the macro \texttt{D\_}. The underscore character specifies that \texttt{d} it is tied closely to the previous colour \texttt{c} in the sense that the ordering of \texttt{c} and \texttt{d} does not matter when computing the $\mu$-value. This symmetry was explained in Section~\ref{sec:computing-mus}. The for-loop associated with \texttt{d} ensures that \texttt{d} is never smaller than \texttt{c}. Here are the relevant lines from \texttt{mutop.c}.

\begin{verbatim}
#define FOR_NEXT_(d, ud, c, uc)
        for(d = c,  ud = min(max(uc, c+1), 9); d <= ud; d++)

#define D_  FOR_NEXT_(d, ud, c, uc)
\end{verbatim}

We see above that the upper bound \texttt{ud} on values of \texttt{d} depends on the current value of \texttt{c}; only if \texttt{c} has reached the value 3, we allow \texttt{d} to take on the next value~3+1=4. Otherwise the maximum value of \texttt{d} is set by the upper bound on \texttt{c}, which is \texttt{uc} (which actually is~3).

The remaining part of Line~11 defines the other for-loops associated with the other boundary vertices. Line~12 specifies the opening of the body of the innermost for-loop.

\verbatimStart{13}
  FOR_VERTEX(x)
  {
    INITX1(m);
\end{Verbatim}

The variable \texttt{x} represents the colour of the vertex $v_M$. Line~13 is a for-loop that takes the \texttt{x} through the values $1,\dots,9$. Recall from Section~\ref{sec:computing-mus} that for $i\in [9]$, $n_i$ is the number of proper 9-colourings of the region that agree with the colouring of the boundary and assign the colour $i$ to $v_M$. The code in the body of the for-loop of Line~13 computes $n_\texttt{x}$. The values are stored in the array \texttt{vx}, where \texttt{vx[x]} is $n_\texttt{x}$.

Since we do not need the value of $n_2$ when computing $\mu$, we skip $\texttt{x}=2$. The macro \texttt{INITX1} on Line~15 skips to next value of \texttt{x} if $\texttt{x}=2$, otherwise it sets \texttt{vx[x]} to~0. Further, it tests if the value of \texttt{x} is the same as the value of \texttt{m} (the boundary vertex adjacent to $v_M$). If this is the case, we will have a colouring that does not agree with the boundary and we can skip immediately to the next value of \texttt{x}.

The number~1 in the macro name \texttt{INITX1} refers to the fact that one vertex has to be checked against \texttt{x}, in this case \texttt{m}. The macro \texttt{INITX2}, which is used for other $M$-regions, takes two arguments, and so on. Many macro names contain a number. The number is referring to the number of relevant arguments that the macro takes.

\verbatimStart{16}
    FOR_VERTEX(m1)
    {
      SKIP2(m1, x, a);
\end{Verbatim}

The variable \texttt{m1} is the colour of the vertex labelled \texttt{\#1} in Figure~\ref{fig:codeMuExample}. The for-loop on Line~16 takes \texttt{m1} from 1 to 9. In order to avoid non-proper colourings or colourings that do not agree with the boundary, we have to skip values of \texttt{m1} for which $\texttt{m1}=\texttt{x}$ or $\texttt{m1}=\texttt{a}$. This test is done with the macro on Line~18.

Unlike vertex \texttt{\#1}, we do not use for-loops for the other vertices \texttt{\#2}, \texttt{\#3}, \texttt{\#5}, \texttt{\#6}, \texttt{\#7} and \texttt{\#8}. The running time would be too long with that many nested for-loops. Instead we resort to a dynamic programming approach. Similarly to the array \texttt{vx}, we use one array per vertex. The arrays are called \texttt{v2}, \texttt{v3}, \texttt{v5}, \texttt{v6}, \texttt{v7} and \texttt{v8}, respectively. The ordering of the arrays by which we fill in their values is first \texttt{v6}, then \texttt{v7}, \texttt{v8}, \texttt{v2}, \texttt{v3}, and lastly \texttt{v5}. For $i\in [9]$, the value of, for instance $\texttt{v2[}i\texttt{]}$, is the number of proper colourings of the subregion consisting of the vertices up to and including \texttt{\#2}, which are \texttt{\#6}, \texttt{\#7}, and \texttt{\#8}, subject to the boundary and the value of \texttt{x} and \texttt{m1}, and such that vertex \texttt{\#2} has colour $i$.

\verbatimStart{19}
      ONES (v6);        ZERO5 (v6, m1, a, c, d, e);
      NEXT (v7, v6);    ZERO4 (v7, m1, e, f, g);
      NEXT (v8, v7);    ZERO4 (v8, i, m1, g, h);
      NEXT (v2, v8);    ZERO4 (v2, j, x, m1, i);
      NEXT (v3, v2);    ZERO5 (v3, k, l, m, x, j);
      GAP  (v5, v3);    ZERO6 (v5, m, n, o, p, b, x);
\end{Verbatim}

Let us start with Line~19. The array \texttt{v6} is first initiated with 1 at every position. The following macro, \texttt{ZERO5}, sets the positions \texttt{m1}, \texttt{a}, \texttt{c}, \texttt{d} and \texttt{e} of \texttt{v6} to~0. The reason is that vertex \texttt{\#6} cannot have the colour specified by these variables as this would violate the property of the colouring being proper or in agreement with the boundary. For the valid colours of \texttt{\#6}, there is only one colouring, since the subregion consists only of the single vertex~\texttt{\#6}.

On Line~20, \texttt{NEXT(v7, v6)} fills in the values of the array \texttt{v7}, subject to the colour of \texttt{v6}. For $j\in[9]$,
\begin{equation}
    \label{eq:codeMuSum}
    \texttt{v7[}j\texttt{]} = \left(\sum_{i=1}^9  \texttt{v6[}i\texttt{]}\right) - \texttt{v6[}j\texttt{]}\,.
\end{equation}
The number of colourings of the subregion $\{\texttt{\#6, \texttt{\#7}}\}$ such that \texttt{\#7} has colour~$j$ is obtained by summing the colourings of the subregion $\{\texttt{\#6}\}$, excluding the case when \texttt{\#6} has colour $j$ as this would not make the colouring proper. Hence the subtraction of \texttt{v6[}j\texttt{]}. In the code of \texttt{mutop.c}, there is a macro called \texttt{SUM} which calculates the sum. It is called from the \texttt{ZERO}-macros to prepare for the subsequent \texttt{NEXT}-macro. Once the \texttt{NEXT}-macro has been executed, we must make sure that colourings not in agreement with the boundary are excluded. For instance, \texttt{v7[3]} must be~0 if the boundary edge specified by \texttt{g} is 3. The macro \texttt{ZERO4} on Line~20 takes care of this, setting appropriate elements of \texttt{v7} to zero according to the indices specified by \texttt{m1}, \texttt{e}, \texttt{f} and \texttt{g}.

The program proceeds with the remaining vertices on Lines~21--24. The difference between the macro \texttt{NEXT} and the macro \texttt{GAP} on Line~24 is that the two vertices referred to in the argument of \texttt{GAP}, here \texttt{\#3} and \texttt{\#5}, are not adjacent. In this case, the array \texttt{v5} is filled in according to Equation~(\ref{eq:codeMuSum}) but without the subtraction.

\verbatimStart{25}
      UPDATEX(v5);
    }
  }
  UPDATEMU;
}
END;
return(0);
}
\end{Verbatim}

After Line~24, the sum $\sum_{i=1}^9  \texttt{v5[}i\texttt{]}$ is the number of proper colourings of the whole region such that vertex $v_M$ has colour \texttt{x} and vertex \texttt{\#1} has colour \texttt{m1}. The macro \texttt{UPDATEX(v5)} on Line~25, inside the body of the for-loop of \texttt{m1}, computes the sum and adds it to \texttt{vx[x]}. Thus, over all values of \texttt{m1}, the final value of \texttt{vx[x]} is computed. The closing bracket on Line~27 indicates the end of the body of the for-loop of \texttt{x}. When Line~28 is reached, the vector \texttt{vx} contains every value we need.

\texttt{UPDATEMU} on Line~28 computes the $\mu$-value
\begin{equation*}
    \mu = \frac{\texttt{vx[1]}}{\texttt{vx[1]}+\left(\sum_{i=3}^9 \texttt{vx[}i\texttt{]}\right)}
\end{equation*}
and compares it to the maximum of all $\mu$-values computed so far. The maximum value of $\mu$ is stored with the two variables \texttt{maxnum} and \texttt{maxden}, where \texttt{maxnum} is the numerator and \texttt{maxden} is the denominator. Both variables are integers of the data type double to allow enough digits. The comparison of $\mu$ with $\texttt{maxnum}/\texttt{maxden}$ is performed by multiplying with the denominators to ensure integer arithmetic. If the newly computed value of $\mu$ exceeds the previously largest $\mu$-value, \texttt{maxnum} and \texttt{maxden} are updated accordingly.

On Line~29, the program proceeds to the next boundary and computes the $\mu$-value over again.

Finally, on Line~30, after all boundaries have been considered, the largest $\mu$-value is outputted as well as how long the program has been running.

\begin{figure}[p]
    \centering
    \showgraphics{code-mu1-8}
    \bigskip

    Part 1 of 5
    \caption{\label{fig:codeM-regions}
        The regions $M_1,\dots,M_{39}$ in labelled region~$G$.}
\end{figure}
\begin{figure}[p]
    \addtocounter{figure}{-1}
    \centering
    \showgraphics{code-mu9-16}
    \bigskip

    Part 2 of 5
    \caption{The regions $M_1,\dots,M_{39}$ in labelled region~$G$.}
\end{figure}
\begin{figure}[p]
    \addtocounter{figure}{-1}
    \centering
    \showgraphics{code-mu17-24}
    \bigskip

    Part 3 of 5
    \caption{The regions $M_1,\dots,M_{39}$ in labelled region~$G$.}
\end{figure}
\begin{figure}[p]
    \addtocounter{figure}{-1}
    \centering
    \showgraphics{code-mu25-32}
    \bigskip

    Part 4 of 5
    \caption{The regions $M_1,\dots,M_{39}$ in labelled region~$G$.}
\end{figure}
\begin{figure}[p]
    \addtocounter{figure}{-1}
    \centering
    \showgraphics{code-mu33-39}
    \bigskip

    Part 5 of 5
    \caption{The regions $M_1,\dots,M_{39}$ in labelled region~$G$.}
\end{figure}

\end{document}